\newtheorem{thm}{Theorem}
\newtheorem{lemma}{Lemma}
\newcommand{\tree}{T}
\newcommand{\obsTime}{t}
\newcommand{\ecce}{e}
\newcommand{\edges}{{\cal E}}
\newcommand{\nodes}{{\cal V}}
\newcommand{\infObs}{{\cal I}'}
\newcommand{\candidateSet}{{\cal K}}
\newcommand{\dist}{d}
\newcommand{\levelset}{{\cal L}}
\newcommand{\singleLevelSet}{{\cal L}'}
\newcommand{\graph}{g}
\newcommand{\collisionset}{{\cal R}}
\newcommand{\collisionsize}{R}
\newcommand{\offspring}{\Phi}
\newcommand{\offspringsize}{\phi}
\newcommand{\sourceSet}{{\cal S}}
\newcommand{\neighbors}{{\cal N}}
\newcommand{\estimatorSet}{{{\cal S}^\prime}}
\DeclareMathOperator*{\argmin}{\arg\!\min}
\DeclareMathOperator*{\argmax}{\arg\!\max}
\title{Catch'Em All: Locating Multiple Diffusion Sources in Networks with Partial Observations}
\author[1]{Kai Zhu\thanks{zhukai93@gmail.com} }
\author[2]{Zhen Chen\thanks{Zhen.Chen.1@asu.edu}}
\author[2]{Lei Ying\thanks{Lei.Ying.2@asu.edu}}
\affil[1]{Google Inc.}
\affil[2]{School of of Electrical, Computer and Energy Engineering,
Arizona State University}
\begin{document}
\date{}
\maketitle
\begin{abstract}
This paper studies the problem of locating multiple diffusion sources in networks with partial observations. We propose a new source localization algorithm, named Optimal-Jordan-Cover (OJC). The algorithm first extracts a subgraph using a candidate selection algorithm that selects source candidates based on the number of observed infected nodes in their neighborhoods. Then, in the extracted subgraph, OJC finds a set of nodes that ``cover'' all observed infected nodes with the minimum radius. The set of nodes is called the Jordan cover, and is regarded as the set of diffusion sources.  Considering the heterogeneous susceptible-infected-recovered (SIR) diffusion in the Erd\H os-R\' enyi (ER) random graph, we prove that OJC can locate all sources with probability one asymptotically with partial observations. OJC is a polynomial-time algorithm in terms of network size. However, the computational complexity increases exponentially in $m,$ the number of sources. We further propose a low-complexity heuristic based on the K-Means for approximating the Jordan cover, named Approximate-Jordan-Cover (AJC). Simulations on random graphs and real networks demonstrate that both AJC and OJC significantly outperform other heuristic algorithms.
\end{abstract} 

\section{Introduction}
Diffusion source localization (or called the information source detection) is to infer the source(s) of an epidemic diffusion process in a network based on some observations of the diffusion. Possible observed information includes node states (e.g., infected or susceptible) and the timestamps at which nodes changed their states. The solution to this problem has a wide range of applications. In epidemiology, identifying patient zero helps diagnose  the cause and the origin of the disease. For cybersecurity, tracing the source of malware is an important step in the investigation of a cyber attack. On online social networks, the trustworthiness of news/information heavily depends on its source.

Since the seminal work of Shah and Zaman \cite{ShaZam_10}, the problem has received a lot of attention from different research communities, including machine learning, signal processing and information theory (a detailed discussion can be found in the related work). However, most existing results assume that the diffusion is from single source and the observed information is one or multiple complete network snapshot(s). Furthermore, provably guarantees on the detection rate have only been established for tree networks except a recent paper by Zhu and Ying  \cite{ZhuYin_16}, where they proposed a Short-Fat Tree (SFT) algorithm and proved that under the single-source, homogeneous Independent-Cascade (IC) model, SFT locates the actual source  in the Erd\H os-R\' enyi random graph \cite{ErdRen_59} with probability one asymptotically (as the size of the network increases) when a complete snapshot of the network is given.

In this paper, we consider the diffusion source localization problem in a setting that generalizes the existing ones at several important directions.
\begin{itemize}
\item {\em Multiple sources versus single source:} In this paper, the diffusion can be originated from multiple nodes simultaneously, instead of from a single source. When the infection duration is sufficiently short, the infected subnetworks from different sources are disconnected components. In such cases, the single-source localization algorithms can be applied to each of the infected subnetwork. We, however, do not make such an assumption, and consider the scenario where the infected subnetworks may overlap with each other, so the single-source localization algorithms cannot be directly applied.

\item {\em A partial snapshot versus a complete snapshot:} In this paper, we assume a partial snapshot in which each node reports its state with some probability, which is in contrast to a complete snapshot assumed in the literature where all nodes' states are observed. Because of a partial snapshot, the sources may not report their states and be observed as infected nodes; and the observed infected nodes may not form a connected component. Both increase the uncertainty and complexity of the problem. In fact, it turns out to be critical to have a candidate selection algorithm to select source candidates from unobserved nodes but only use observed infected nodes in computing the infection eccentricity. The selection step yields $27\times$ reduction on the computing time in our simulations while guaranteeing the same the detection rate, and yields $600\times$ reduction on the computing time with a slight reduction of the detection rate. 

\item {\em Heterogeneous diffusion versus homogeneous diffusion:} Our algorithm applies to the heterogeneous SIR diffusion model where links have different infection probabilities and nodes have different recovery probabilities. The asymptotic guarantees on the detection rate hold for the heterogeneous SIR model.
\end{itemize}

While some of these extensions have been investigated in the literature individually, our model includes all three extensions. We propose a novel algorithm for locating multiple sources for such a general model and prove theoretical guarantees on the detection rate for non-tree networks. The main results of the paper are summarized below.
\begin{itemize}
\item[(1)] We introduce the concept of Jordan cover, which is an extension of Jordan center.  Loosely speaking, a Jordan cover with size $m$ is a set of $m$ nodes that can reach all {\em observed} infected nodes with the minimum hop-distance. We propose Optimal-Jordan-Cover (OJC), which consists of two steps:  OJC first selects a subset of nodes as the set of the candidates of the diffusion sources; and then it finds a Jordan cover in the subgraph induced by the candidate nodes and the observed infected nodes. We emphasize that only the hop-distance to the observed infected nodes is considered in computing a Jordan cover.

\item[(2)] We analyze the performance of OJC on the ER random graph, and establish the following performance guarantees.
\begin{itemize}
\item[(i)] When the infection duration is shorter than $\frac{2}{3} \frac{\log n}{\mu},$  where $\mu$ is the average node degree and $n$ is the number of nodes in the network, OJC identifies the sources with probability one asymptotically as $n$ increases.
\item[(ii)] When the infection duration is at least $\left\lceil\frac{\log n}{\log \mu + \log q}\right\rceil +2$ where $q$ is the minimum infection probability, {\em under any source location algorithm,} the detection rate diminishes to zero as $n$ increases under the Susceptible-Infected (SI) and Independent-Cascade (IC) models, which are special cases of the SIR model.

\end{itemize}
\item[(3)] The computational complexity of OJC is polynomial in $n$, but exponential in $m.$  We further propose a heuristic based on the K-Means for approximating the Jordan cover, named Approximate-Jordan-Cover (AJC). Assuming a constant number of iterations when using the K-Means, the computational complexity of AJC is $O(n E),$ where $E$ is the number of edges. Our simulations on random graphs and real networks demonstrate that both AJC and OJC significantly outperform other heuristic algorithms.
\end{itemize}

\subsection{Related Work}

Shah and Zaman \cite{ShaZam_10,ShaZam_11} studied the rumor source detection problem, and proposed a new graph centrality called \emph{rumor centrality}. They proved that the node with the maximum rumor centrality is the maximum likelihood estimator (MLE) of the diffusion source on regular trees under the continuous-time SI model. In addition, it has been proved that the detection probability is greater than zero on regular trees and approaches one for geometric trees. \cite{ShaZam_12} analyzed the detection probability of rumor centrality for general random trees. Later, the performance of rumor centrality has been studied under different models, including multiple sources \cite{LuoTayLen_13}, incomplete  observations \cite{KarFra_13}, multiple independent diffusion processes from the same source \cite{WanDonZha_14}.

Kai and Ying \cite{ZhuYin_13} proposed a sample path based approach for single source detection under the SIR model and introduced the concept of \emph{Jordan infection center}. Assuming the homogeneous SIR model, a complete network snapshot and regular tree networks, they \cite{ZhuYin_13} proved that the Jordan infection center is the root of the most likely diffusion sample path, and it is within a constant hop-distance from the actual source with a high probability. {\em Assuming tree networks}, the performance of the Jordan infection center has been further studied, including partial observations under the heterogeneous SIR model \cite{ZhuYin_14}, multiple sources \cite{CheZhuYin_14}, the SI model and SIS model \cite{LuoTayLen_14}.

Besides the rumor centrality and sample path based approach, diffusion source localization has also been investigated from other perspectives: 1) \cite{LapTerGun_10} proposed a dynamical programming algorithm for the IC model; 2) A variation of eigen centrality was proposed in \cite{PraVreFal_12} to detect multiple sources under the SI model; 3) \cite{LokMezOht_14} derived the mean field approximation of the MLE of the actual source and proposed a dynamic message passing algorithm based on that. Furthermore, \cite{PinThiVet_12,ZhuCheYin_15,AgaLu_13,ZejGomSin_13,FarGomZam_15} have proposed algorithms to identify the source with partial timestamps.

In this paper, we propose two new source localization algorithms, OJC and AJC, based on the Jordan cover. OJC guarantees probability one detection asymptotically  on the ER random graph under the multi-source, heterogeneous SIR model, and with an incomplete snapshot, which is the first multi-source localization algorithm with provable guarantees for non-tree networks. AJC is a low-complexity approximation of OJC.

\section{Problem Formulation}\label{sec:ProblemFormulation}
We assume the network is represented by an undirected graph $g$. Denote by $\edges(g)$ the set of edges and $\nodes(g)$ the set of nodes in graph $g$. Let $n$ denote the number of nodes and $E$ denote the number of edges. We further assume a heterogeneous SIR model for diffusion. In this model, each node has three possible
        states: susceptible (S), infected (I) and recovered (R). Time is slotted. At the beginning of each time slot, each
        infected node (say node $u$) attempts to infect its neighbor (say node $v$) with
        probability $q_{uv},$ independently across edges.  We call $q_{uv}$ the \emph{infection
        probability} of edge $(u,v)$. At the end of each time slot, each infected node (say node $u$) recovers with
        probability $r_{u},$ independent of other infected nodes. We call $r_{u}$ the \emph{recovery probability}. We further assume  $q_{uv}\in(0,1]$ for all edges $(u,v)\in \edges(g)$ and $r_{v}\in[0,1]$ for all nodes $v\in\nodes(g)$.

Note that the SIR model includes two important special cases. When the recovery probability is zero, the
SIR model becomes the Susceptible-Infected (SI) model \cite{Bai_75}, where infected nodes cannot recover. When the recovery probability is one, the SIR model becomes the Independent Cascade (IC)
model \cite{GolLibMul_01} by regarding both infected nodes and recovered nodes as active nodes and regarding the susceptible nodes as inactive nodes.

We assume the epidemic diffusion starts from $m$ sources in the network. In other words, at time slot
$0$, $m$ nodes (sources) are in the infected state and all other nodes are in the susceptible state.
Denote by $s_1,s_2,\cdots,s_m$ the sources and $\sourceSet$ the set of
sources, i.e., ${\cal S}=\{s_1,s_2,\cdots,s_m\}$. We assume $m$ is {\em a constant independent of $n$.}

Finally, we assume that a {\em partial} snapshot of the network state at time slot $t$ is given, with an unknown observation time $t.$ In the snapshot, each infected or
recovered node reports its state with probability $\theta_v\in(0,1)$, independent of other nodes. If a node reports its state, we call it an observed
node. Denote by $\infObs$ the set of observed infected and recovered nodes. In this paper, we call infected
nodes and recovered nodes as ``infected nodes'' unless explicitly clarified.

Based on $\infObs$, the source localization problem is to find $\sourceSet$ that solves the following maximum likelihood (ML) problem
$$
    {\cal W}^* = \argmax_{{\cal W}\subset \nodes(g)} \Pr(\sourceSet = {\cal W}|\infObs).
$$ Even with a single diffusion source, this problem is known to be a difficult problem \cite{ShaZam_10,ZhuYin_13} on non-tree networks. Therefore, instead of solving the ML problem above, we are interested in algorithms with {\em asymptotic perfect detection}, i.e., finding all sources with probability one as the network size increases. We believe this alternative metric is reasonable because we often need to solve the problem for large-size networks such as online social networks, and an algorithm with asymptotic perfect detection can detect sources with a high probability when the network size is large. 

\section{Algorithms}\label{sec:Algorithm}

\begin{algorithm}
	\SetAlgoLined
	
	\KwIn{$\infObs, g, Y$;}
	\KwOut{$g^-$ (the candidate subgraph)}
	
	Set ${\cal K}$ to be an empty set.
	
	\For {$v \in \nodes(g)$}{
		\If{$|\neighbors(v)\cap\infObs|>= Y$}{
			Add $v$ to ${\cal K},$ where $\neighbors(v)$ is the set of neighbors of node $v.$
		}
	}
	
	Set ${\cal K}^+$ to be ${\cal K}\cup \infObs$.
	
	Set $g'$ to be the graph induced by set ${\cal K}^+$.
	
    \color{black}
    Find all connected components in $g'.$
	
			\uIf{$g'$ is connected}{
				Set $g^- = g'$.
			}
			\Else{
				Randomly select one node in each components of $g'$. Denote by ${\cal R}$ the set of the selected nodes.
				
				Randomly select one node $v \in {\cal R}.$
				
				\For{ $u\in {\cal R}\backslash v$} {
					Compute the shortest path $P$ from $v$ to $u.$
					
					Set $g'=g'\cup P.$
					}
				Set $g^- = g'$
				}
	\color{black}
	\Return $g^-, {\cal K}$.
	\caption{The Candidate Selection Algorithm}\label{alg:CS}
\end{algorithm}

In this section, we present OJC and AJC based on the concept of Jordan cover.

Define the hop-distance between a node $v$ and a {\em node set} ${\cal W}$ to be the minimum hop-distance between node $v$ and any node in ${\cal W},$ i.e., $$d(v,{\cal W}) \triangleq \min_{u\in{\cal W}} d(v,u).$$ We then define the \emph{infection eccentricity} of node set ${\cal W}$ to be the maximum hop-distance from an infected node in $\infObs$ to set $\cal W,$ i.e.,
\begin{equation}
\ecce({\cal W}, {\cal \infObs}) = \max_{v \in \infObs} d(v,{\cal W}).\label{ns: ecc}
\end{equation}  
We further define $m$-Jordan-cover ($m$-JC) to be the set ${\cal W}^*(\candidateSet,\infObs, m)$ such that
\begin{align}
{\cal W}^*(\candidateSet,\infObs, m) = \argmin_{{\cal W}\in \{{\cal W}||{\cal W}| = m, {\cal
W}\subset \candidateSet\}} \ecce({\cal W},\infObs).\label{eqn:JC}
\end{align}
where $\infObs$ is the set of observed infected nodes that $m-$JC needs to cover and $\candidateSet$ is the \emph{candidate set} for the sources. Therefore, $m$-JC is the set of $m$ nodes in $\candidateSet$ with the minimum infection eccentricity.  %The construction of the candidate set will be defined next.

We now introduce the \emph{optimal Jordan cover} (OJC) algorithm whose asymptotic detection rate will be analyzed in Section \ref{sec:MainResult-OJCperformances}.

\noindent{\bf The Optimal Jordan Cover (OJC) Algorithm}
\begin{itemize}
    \item {\bf Step 1: Candidate Selection:} Let $Y$ be a positive integer.  The candidate set $\cal K$ is the set of nodes with more than $Y$ observed infected neighbors. In addition, define ${\cal
K}^+\triangleq {\cal K}\cup \infObs.$  Denote by $g^-$ a connected subgraph of $g$ \emph{induced} by
node set ${\cal K}^+$. An {induced graph} is a subset of nodes of a graph with
all edges whose endpoints are both in the node subset. If the induced graph is not connected, we select a random node in each component, randomly pick one selected node and add the shortest pathes from this node to all other selected nodes to form a connected $g^-.$ We call $Y$ the \emph{selection
threshold}.  The pseudo code of the candidate selection algorithm for selecting $\cal K$ and $g^-$ can be found in Algorithm \ref{alg:CS}.

\item {\bf Step 2: Jordan Cover}: For any $m$ combination of nodes in ${\cal K}$ in Step 1, we compute
        the infection eccentricity of the node set as defined in (\ref{ns: ecc}) on subgraph $g^-,$ and select the combination with the minimum infection eccentricity as the set of sources. Ties are broken by the total distance from the observed infected to the node set, i.e., $\sum_{v\in{\cal I}'} d(v, {\cal W}).$ 
\end{itemize}

With a properly chosen threshold $Y,$ the candidate selection step includes all sources in $\cal K$ with a high probability and excludes nodes that are more than $t+1$ hops away from all sources. By limiting the computation on the induced subgraph $g^-,$ the computational complexity is reduced significantly. From simulations, we will see that it results in $27\times$ reduction of the running time without affecting the detection rate. The asymptotic detection rate of OJC will be studied in Theorem \ref{thm:JIC}. Under some conditions, OJC
identifies all sources with probability one asymptotically.

OJC is a polynomial-time algorithm for given $m$, but the complexity increases exponentially in $m.$ To further reduce the complexity, we propose Approximate Jordan Cover (AJC), which replaces Step 2 of OJC with the K-Means algorithm \cite{HarWon_79}. As shown in the simulations, the performance of the AJC algorithm, in terms of both detection rate and the error distance, is close to OJC with much shorter running time. The computational complexity of both algorithms are summarized in the following theorem. 
\begin{thm}
The computational complexity of OJC is \begin{align*}
O\left(|{\cal I}'| \left(|\edges(g)|+m{|\nodes(g^-)| \choose m}\right)\right),
\end{align*} and the computational complexity of AJC is
\begin{align*}
O\left(|{\cal I}'|\left(|\edges(g)|+H|\nodes(g^-)| \right)\right),
\end{align*} where $H$ is the number of iterations used in the K-Means algorithm in AJC.
\end{thm}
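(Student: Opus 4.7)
The plan is to bound the cost of each step of the two algorithms separately, using a shared BFS-based distance precomputation as the workhorse.

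For OJC I would analyze Step 1 (Algorithm \ref{alg:CS}) and Step 2 in turn. In Step 1, the scan that computes $|\neighbors(v)\cap\infObs|$ for every $v\in\nodes(g)$ can be implemented by one pass over the adjacency list of $g$, at cost $O(|\edges(g)|)$; extracting the induced subgraph $g'$ on ${\cal K}\cup\infObs$ and finding its connected components is another $O(|\edges(g)|)$ pass; and the shortest paths from a representative vertex to one vertex per remaining component can all be extracted from a single BFS in $g$, so Step 1 totals $O(|\edges(g)|)$. For Step 2, I would first precompute the table $d_{g^-}(v,u)$ for every $v\in\infObs$ and every $u\in\nodes(g^-)$ by running one BFS on $g^-$ from each observed infected node, at cost $O(|\infObs|\cdot|\edges(g^-)|)=O(|\infObs|\cdot|\edges(g)|)$. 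Then for each of the at most ${|\nodes(g^-)|\choose m}$ candidate $m$-subsets ${\cal W}\subset{\cal K}$, evaluating $\ecce({\cal W},\infObs)=\max_{v\in\infObs}\min_{u\in{\cal W}} d_{g^-}(v,u)$ reduces to $O(m|\infObs|)$ table lookups, giving $O\bigl(m|\infObs|{|\nodes(g^-)|\choose m}\bigr)$ in total. Summing these contributions yields the stated OJC bound.

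For AJC, Step 1 and the same BFS precomputation of $d_{g^-}(v,\cdot)$ for $v\in\infObs$ are unchanged, contributing $O(|\infObs|\cdot|\edges(g)|)$. In each of the $H$ K-Means iterations I would (a) re-assign every $v\in\infObs$ to its nearest of the $m$ current centers in $O(m|\infObs|)$ time using the table, and (b) update each center as the node of $\nodes(g^-)$ that minimizes the maximum distance to its current cluster by scanning $\nodes(g^-)$ with the cached distances; for cluster $C_i$ this costs $O(|C_i|\cdot|\nodes(g^-)|)$, and since $\{C_i\}$ partitions $\infObs$ the total update cost per iteration is $O(|\infObs|\cdot|\nodes(g^-)|)$. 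Multiplying by $H$ and summing with the precomputation gives the AJC bound.

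The only delicate point — and the one I would spell out carefully — is the center-update rule in Step 2 of AJC: the $O(|\infObs|\cdot|\nodes(g^-)|)$ per-iteration bound relies crucially on the center being recomputed from the cached $\infObs$-to-$\nodes(g^-)$ distance table rather than by fresh BFS from each trial center; with fresh BFS the per-iteration cost would jump to $O(|\nodes(g^-)|\cdot|\edges(g)|)$ and the claimed complexity would fail. Everything else is a routine accounting exercise.
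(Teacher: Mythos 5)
Your proposal is correct and follows essentially the same route as the paper's own proof: an $O(|\edges(g)|)$ accounting of the candidate selection step, a BFS precomputation of the $\infObs$-to-$\nodes(g^-)$ distance table enabling $O(1)$ lookups, an $O(m|\infObs|)$ eccentricity evaluation for each of the ${|\nodes(g^-)|\choose m}$ candidate subsets in OJC, and an $O\left(\left(m+|\nodes(g^-)|\right)|\infObs|\right)$ per-iteration cost for the K-Means phase of AJC. The ``delicate point'' you highlight --- that the center update must scan $\nodes(g^-)$ using the cached distance table rather than fresh BFS runs --- is precisely how the paper arrives at its $|\nodes(g^-)||{\cal I}'|$ per-iteration bound, so your argument and the paper's coincide.
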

\begin{proof}
We first analyze the complexity of the OJC algorithm. For simplicity, denote by $V = |\nodes(g)|$ the number of nodes and $E = |\edges(g)|$ the number of edges.

In the candidate selection stage, each node needs to compute its degree. Therefore, each edge is counted twice and each node is processed once. The complexity is $O(V + E).$ Counting the number of the connected components using breadth first search is of complexity $O(V + E)$. When the induced graph is not connected, the complexity to compute the shortest paths from one node to all other nodes in an unweighted graph is of complexity $O(V+E)$. As a summary, the complexity of the candidate selection algorithm is $O(V + E)$.

The resulting subgraph has $|\nodes(g^-)|$ nodes and at most $E$ edges. Next, we compute the complexity of the OJC.

We first compute the distances from nodes in $g^-$ to nodes in ${\cal I}'$. Note this is equivalent to do a breadth-first search from each node in ${\cal I}'$. The complexity of the BFS is $O(|\nodes(g^-)|+E).$ Therefore, the complexity for this step is $O(|{\cal I}'|(|\nodes(g^-)|+E)).$ The results are saved in a two dimensions hashtable so that querying the distance from one observed infected node and another node in graph $g^-$ is of complexity $O(1).$

After the above precomputation, for each set of nodes with size $m,$ we want to obtain its infection eccentricity. For each observed infected node, we query the hash table to find the minimum distance to the set of nodes with size $m.$ The complexity is $O(m).$ To compute the infection eccentricity of one set is of complexity $O(m|{\cal I}'|).$ In addition, there are ${|\nodes(g^-)| \choose m}$ possible node sets. Therefore, the complexity is $O((m|{\cal I}'|){|\nodes(g^-)|\choose m})$

As a summary, the complexity of the OJC algorithm is
\begin{align*}
&  O\left(V+E+|{\cal I}'|(|\nodes(g^-)|+E)+m|{\cal I}'|{|\nodes(g^-)|\choose m}\right) \\
=& O\left(|{\cal I}'|\left(E+m{|\nodes(g^-)|\choose m}\right)\right)
\end{align*}

Next, we analyze the complexity of the AJC algorithm. Note the complexity of the candidate selection and the precomputation are the same.  The complexity of the Kmeans algorithm is analyzed as follows.

The complexity of the membership assignment phase is $O(m|{\cal I}'|)$. For each observed infected node, we only need to query its distance to the preselected $m$ sources and each query is of complexity $O(1)$ based on the results of the precomputation.

For the center update phase, for each cluster, we need to search all $|\nodes(g^-)|$ nodes to find a center. Therefore, the complexity is $|\nodes(g^-)||{\cal I}'|.$

As a summary, the complexity of one iteration of the Kmeans algorithm is
\[
O\left(\left(m+|\nodes(g^-)|\right)|{\cal I}'|\right)
\]

Denote by $H$ the number of iterations, the complexity of the AJC algorithm is
\begin{align*}
&O\left(V+E+|{\cal I}'|(|\nodes(g^-)|+E)+N\left(\left(m+|\nodes(g^-)|\right)|{\cal I}'|\right)\right)\\
=& O\left(|{\cal I}'|\left(E+H|\nodes(g^-)|\right)\right)
\end{align*}
\end{proof}
\section{Asymptotic Analysis of OJC}\label{sec:MainResults}
In this section, we present the asymptotic analysis of the detection rate of OJC on the ER random graph. The results include the conditions that guarantee probability one detection  and the conditions under which it is impossible to detect the source set with nonzero probability under any source localization algorithm.
\subsection{Asymptotic perfect detection on the ER random graph}\label{sec:MainResult-OJCperformances}
We first present the positive result that shows that on the ER random graph, OJC identifies the $m$ sources with probability one asymptotically under some conditions. Recall that $n$ is the number of nodes in the graph, and $m$ is the number of sources, which is a constant independent of $n.$ Denote by $p$ the wiring probability of the ER random graph, which is the probability that there exists a link between two nodes. Let $\mu = np$, which is the average node degree. Define $q \triangleq \min_{u,v \in \nodes(g)} q_{u,v},$ i.e., the minimum infection probability over all edges and $\theta \triangleq \min_{v \in \nodes(g)} \theta_v$
i.e., the minimum report probability over all nodes.

\begin{thm}\label{thm:JIC}
OJC identifies all $m$ sources with probability one as $n\rightarrow \infty$ when the following conditions hold:
\begin{itemize}
\item[(c1):] $\mu q \theta=\Omega\left(\log n\right),$\footnote{Throughput this paper, the asymptotic order notation is defined for $n\rightarrow \infty.$}

\item[(c2):] $\lim\sup_{n\rightarrow\infty}\frac{Y}{\mu q \theta}<1$ and  $\lim\inf_{n\rightarrow\infty}\frac{Y}{\mu q \theta}>0,$ and

\item[(c3):]  $\obsTime=\omega(D)$ and  $\lim\sup_{n\rightarrow \infty} \frac{\obsTime}{ \frac{\log n}{\log\mu}}<\frac{2}{3}.$ 
\end{itemize}
\end{thm}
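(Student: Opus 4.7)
The plan is to decompose correctness of OJC into two events and combine them via a union bound: (A) the candidate set $\candidateSet$ contains the source set $\sourceSet$; and (B) conditional on (A), the source set $\sourceSet$ uniquely minimizes $\ecce(\cdot,\infObs)$ on $g^-$ among all size-$m$ subsets of $\candidateSet$. Part (A) relies on (c1)--(c2); part (B) additionally uses (c3). Because $m$ is constant and there are $\binom{n}{m}=O(n^m)$ subsets to consider in (B), (c1) must furnish Chernoff tails strong enough to survive that final union bound.

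For (A), I would count, for each source $s_i$, its observed infected neighbors at time $\obsTime$. The degree of $s_i$ in the ER graph is binomial with mean $\mu$ and concentrates around $\mu$. Each neighbor is infected by $s_i$ in the first slot with probability at least $q$ and reports independently with probability at least $\theta$, so the count stochastically dominates a binomial with mean $(1-o(1))\mu q\theta$. Under (c1) this mean is $\Omega(\log n)$; a multiplicative Chernoff bound plus (c2) (which gives $\limsup Y/(\mu q\theta)<1$) then shows that $s_i$ has at least $Y$ observed infected neighbors with probability $1-n^{-\Omega(1)}$, and a union bound over the $m=O(1)$ sources yields $\sourceSet\subset\candidateSet$. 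Applying the same computation to any \emph{intermediate} infected node along a diffusion chain places that node into $\candidateSet$ as well, so the shortest diffusion path from each source to each observed infected node is retained inside $g^-$, which immediately gives the upper bound $\ecce(\sourceSet,\infObs)\leq \obsTime$ in $g^-$.

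The crux is the lower bound of (B): for every size-$m$ set $\mathcal{W}\subset\candidateSet$ with $\mathcal{W}\neq\sourceSet$ we need $\ecce(\mathcal{W},\infObs)\geq \obsTime+1$. My approach is to pick $s_i\in \sourceSet\setminus\mathcal{W}$ and analyse the offspring $\offspring$ and level sets $\levelset$ of $s_i$. In the locally tree-like ER regime, $s_i$ seeds $\Omega(\mu q)$ independent branches at its $\Omega(\mu q)$ first-slot infected neighbors, and standard branching-process concentration (using $\mu q=\Omega(\log n)$ from (c1)) produces $\Omega((\mu q)^{\obsTime})$ depth-$\obsTime$ infected nodes spread across those branches. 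Each $w_j\in \mathcal{W}$ lies in at most one branch, so at most $m$ of the $\Omega(\mu q)$ branches are ``occupied''; in every other branch the only path of length $\leq \obsTime$ from $w_j$ to a depth-$\obsTime$ node passes through $s_i$ and therefore has length $\obsTime+d(s_i,w_j)\geq \obsTime+1$. Thus $\Omega((\mu q)^{\obsTime})$ depth-$\obsTime$ infected nodes lie at distance $>\obsTime$ from every $w_j$, and a $\theta$ fraction of them are observed, giving the required $v\in\infObs$ with $d(v,\mathcal{W})>\obsTime$.

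The hard part will be validating this tree-like picture once $\obsTime$ exceeds $\frac{1}{2}\log n/\log\mu$, because then BFS balls of radius $\obsTime$ grow to size up to $\mu^{\obsTime}\leq n^{2/3}$ and collision (shortcut) edges between the balls around different sources or candidates begin to appear. Condition (c3), $\limsup \obsTime/(\log n/\log\mu)<\frac{2}{3}$, is exactly what bounds the expected number of such collision edges per pair and ensures that only a vanishing fraction of $s_i$'s branches can be ``corrupted'' by a shortcut into reaching some $w_j$ within $\obsTime$ hops; the collision-set notation $\collisionset$ introduced in the paper is the natural bookkeeping device for this step. The complementary part $\obsTime=\omega(D)$ of (c3) keeps the offspring deep enough for the Chernoff concentration on $(\mu q)^{\obsTime}$ to hold uniformly across branches. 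Finally, because each failure event has probability $n^{-\Omega(1)}$, the union bound over the $O(n^m)$ wrong subsets $\mathcal{W}$ is absorbed.
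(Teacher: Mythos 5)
Your skeleton matches the paper's proof in its main beats (Chernoff-type inclusion of the sources in $\candidateSet$; pigeonhole over the level-$1$ branches of the BFS tree rooted at an excluded source $s_i$; a tree-distance argument showing an uncovered branch forces eccentricity $>\obsTime$; collision-edge counting to kill shortcuts), but your closing step has a genuine gap. You propose to finish with a union bound over the $O(n^m)$ wrong sets ${\cal W}$, and you yourself flag that (c1) must supply tails strong enough to absorb it --- it does not. Condition (c1) only says $\mu q\theta \geq c\log n$ for some unspecified constant $c$, so each Chernoff tail is $n^{-\Theta(c)}$ with $c$ unrelated to $m$, and $n^{m}\cdot n^{-\Theta(c)}$ need not vanish. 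The paper never takes a union bound over candidate sets: it conditions on a \emph{constant} number of global structural events $E_1$--$E_6$ (offspring concentration, the collision-edge counts $\collisionsize_j$, the threshold failure at level $\obsTime+D+1$, infected-offspring counts, and the per-branch lower bound $\tilde{Z}'^{\obsTime}_{\obsTime}(v)\geq[(1-\delta)^2\mu q]^{\obsTime-1}(1-\delta)\theta$), each obtained by union bounds over only polynomially many nodes or branches against exponentially small tails. Conditioned on these events, the claim that every size-$m$ set ${\cal W}$ with $s_1\notin{\cal W}$ has $\ecce({\cal W},\infObs)>\obsTime$ is \emph{deterministic and simultaneous in ${\cal W}$}, because the shortcut-reachable count $H\leq\sum_i H_i$ is bounded for every node of $g^-$ at once given $E_2$; the only union bound over random objects is over the $m$ sources. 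Your randomness must be repackaged this way (events indexed by nodes and branches, not by sets ${\cal W}$), or else (c1) would have to be strengthened to a constant depending on $m$.

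A second omission: you use candidate selection only for the inclusion direction, but the lower-bound argument also needs the \emph{exclusion} property (the paper's $E_3$): with $Y=(1-\delta)^3\mu q\theta$, every node at level $\obsTime+D+1$ or beyond fails the threshold, hence $\candidateSet\subset\levelset_{\obsTime+D}$ and the whole subgraph $g^-$ sits within $\obsTime+D$ hops of $s_1$. Without this, ${\cal W}$ could contain nodes outside the ball in which the collision statistics of $E_2$ are controlled (they hold only up to level $\frac{\log n}{(1+\alpha)\log\mu}$), and the shortcut-counting lemma you invoke has no force there. This is also where the constant $\frac{2}{3}$ really enters, and your ``vanishing fraction of corrupted branches'' heuristic is not quite the comparison made: the proof fixes \emph{one} uncovered branch $u$ and compares the global count of nodes reachable from ${\cal W}$ within $\obsTime$ hops through a collision node, $H\leq mc[(1+\delta)\mu]^{\frac{3}{4}(\obsTime+D)+\frac{1}{2}}+mc[(1+\delta)\mu]^{(\frac{5}{4}-\frac{\alpha}{2})(\obsTime+D)+2}$, against that single branch's observed depth-$\obsTime$ population $[(1-\delta)^2\mu q]^{\obsTime-1}(1-\delta)\theta$; the ratio vanishes exactly when $\alpha>\frac{1}{2}$, and $\obsTime+D<\frac{\log n}{(1+\alpha)\log\mu}$ with $\alpha>\frac{1}{2}$ is precisely the $\limsup \obsTime/(\log n/\log\mu)<\frac{2}{3}$ of (c3). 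With the events repackaged globally and the exclusion step added, your outline would align with the paper's proof; as written, the final union bound and the unconstrained location of ${\cal W}$ are where it fails.
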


\begin{proof}
 In this proof, we will show that one source $s_i \in {\cal W}^*(\candidateSet,\infObs, m)$ with a high probability. Then with a union bound, we will show that
\[
 {\cal S} = {\cal W}^*(\candidateSet,\infObs, m).
\]
with a high probability for sufficiently large $n$.

Recall that we regard the recovered nodes and infected nodes as "infected" since the recovery process is not related to the proof. Without loss of generality, we consider $s_1$. Throughout the proof, we consider the BFS tree rooted at $s_1$. In particular, the level of one node means the level of the node on the BFS tree rooted at
$s_1$.

We first introduce and recall some necessary notations terms.

For an ER random graph $g$.
\begin{itemize}
\item A node $v$ is said to be on level $i$ if $d_{s_1v} = i$. Denote by $\levelset_i$ the set of nodes from level 0 to level $i$ and $l_i = |\levelset_i|.$

\item Denote by $L'_i$ the set of nodes on level $i$. In addition, $l'_i$ is the number of nodes on level $i$.
\item The descendants of node $v$ in a tree are all the nodes in the subtree rooted at $v.$ In addition, $v$ is the ancestor of all its descendants.
\item The offsprings of a node on level $k$ (say $v$) are the nodes which are on level $k+1$ and have edges to $v.$ Denote by $\offspring(v)$ the offspring set of $v$ and $\offspringsize(v)=|\offspring(v)|.$
\item Denote by $p$ the wiring probability in the ER random graph.
\item Denote by $n$ the total number of nodes.
\item Denote by $\mu=np.$
\item Denote by $\hbox{Bi}(n,p)$ the binomial distribution with $n$ number of trials and each trial succeeds with probability $p.$
\item Denote by $\tree^\dag$ the BFS tree rooted at $s_1.$
\item Denote by $\offspring'(v)$ the set of offsprings of node $v$ on $\tree^\dag$ and $\offspringsize'(v)=|\offspring'(v)|.$
\item Denote by $\graph_\obsTime$ the subgraph induced by all nodes within $\obsTime$ hops from $s$ on the ER graph.  The \emph{collision edges} are the edges which are not in $\tree^\dag$ but in $\graph_\obsTime,$ i.e., $e\in \edges(\graph_\obsTime)\backslash\edges(\tree^\dag).$
\item A node who is an end node of a collision edge is called a \emph{collision node}.  Denote by $\collisionset_k$ the set of collision edges whose end nodes are within level $k$ and $\collisionsize_k=|\collisionset_k|.$

\item Denote by ${\cal Z}_i$ the set of nodes which are infected at time $i$.

\item Denote by $\tilde{\cal Z}^i_j(v)$ the set of nodes that
    are infected at time slot $i$, on level $j$, when $s_1$ is the only infection source in the
    graph and the descendants of node $v$ in the BFS tree rooted at $s_1$ and $\tilde{\cal
    Z}'^i_j(v)$ are the observed infected nodes in $\tilde{\cal Z}^i_j(v)$. $\tilde{Z}^i_j(v)$ and
    $\tilde{Z}'^i_j(v)$ are defined as the cardinality respectively.
\item Denote by $\psi(v)$ the number of observed infected neighbors of node $v$.
\item Denote by $\psi'(v)$ the number of infected offsprings of node $v$ (the offspring is defined based
on the BFS tree rooted at node $s_1$).
\item Denote by $\psi''(v)$ the number of
\emph{observed} infected offsprings of node $v$.
\end{itemize}

To prove $s_1\in {\cal W}^*(\candidateSet,\infObs, m),$ we need to show that any set ${\cal W}$ such
that $s_1 \not\in {\cal W}$ has a infection eccentricity larger than $t$ on $\graph^-$.  We need the following asymptotic high probability events.
\begin{itemize}
\item {\bf Offsprings of each node.} Consider the BFS tree rooted at source $s_1$. Define
\[
E_1=\{\forall v \in \levelset_{\obsTime + D}, \phi'(v)\in((1-\delta)\mu,(1+\delta)\mu)\}.
\]
$E_1$, when occurs, provides upper and lower bounds for the number of offsprings of each node in
$\levelset_{\obsTime + D}$.

\item {\bf Total number collision edges.} Consider the BFS tree rooted at source $s_1$. We define event $E_2$ when the following upper bound on the collision edges holds
\[
\collisionsize_j
\begin{cases} =0 & \mbox{if } 0<j\leq \lfloor m^-\rfloor, \\
\leq 8\mu & \mbox{if } \lfloor m^-\rfloor<j<\lceil m^+\rceil,\\
\leq \frac{4[(1+\delta)\mu]^{2j+1}}{n} & \mbox{if } \lceil m^+\rceil\leq j\leq \frac{\log n}{(1+\alpha)\log \mu}.  \end{cases} \]
where $m^+=\frac{\log n}{2\log[(1+\delta)\mu]}$ and $m^-=\frac{\log n-2\log \mu-\log
8}{2\log[(1+\delta)\mu]}$. $E_2$ provides the upper bounds for collision edges at different levels. Note that a subgraph with diameter $\leq m^-$ is a tree with high probability since there is no collision edges.

\item {\bf Detailed collision edges in level $> D + t$.}  Define $E_3$ to be the event that
\[
\forall v \in \singleLevelSet_{D + t + 1}, \quad \psi(v)< (1-\delta)^3\mu q \theta.
\]

\item{\bf Infected nodes.}
 Define
\[
E_4 = \{\forall v \in \cup_{i = 0}^{t - 1}{\cal Z}_i, \quad \psi'(v) > (1-\delta)^2\mu q.\}
\]
and
\[
E_5 = \{\forall v \in \cup_{i = 0}^{t - 1}{\cal Z}_i, \quad \psi''(v) > (1-\delta)^3\mu q\theta.\} \]

\item{\bf Infected nodes from source $s_1$.} Define

    \[
        E_6 = \{\tilde{Z}^1_1\geq (1-\delta)^2\mu q\}\cap \{\forall v \in \tilde{Z}_1^1,
        \tilde{Z}'^t_t(v)\geq [(1-\delta)^2\mu q]^{t-1}(1-\delta)\theta\}
    \]
\end{itemize}

To prove these event happens with a high probability, we have
\begin{align*}
    &\Pr(E_1\cap E_2 \cap E_3 \cap E_4\cap E_5\cap E_6) \\
    \geq&\Pr(E_1)(1-\Pr(\bar{E}_2|E_1) -
    \Pr(\bar{E}_3|E_1)  -
    \Pr(\bar{E}_6|E_1)) - \Pr(\bar{E}_4\cup\bar{E}_5)\\
=&\Pr(E_1)(1-\Pr(\bar{E}_2|E_1) -
    \Pr(\bar{E}_3|E_1) -
    \Pr(\bar{E}_6|E_1))- (1- \Pr(E_4\cap E_5))\\
\geq& 1-\epsilon,
\end{align*}
for sufficiently large $n$.
\begin{figure}
        \centering
 		\includegraphics[width=0.7\columnwidth]{./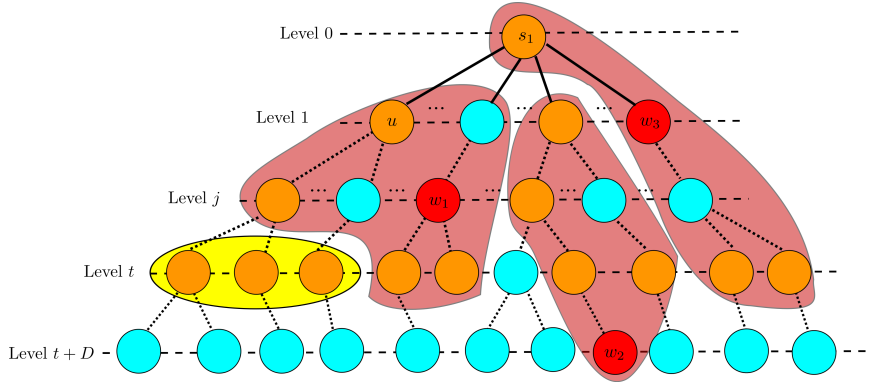}
        \caption{A pictorial example for Theorem 2.}\label{fig:pictorialEg}
\end{figure}
Based on Lemma~\ref{lem:E1E2}, \ref{lem:E3}, \ref{lem:E4E5} and \ref{lem:E6}, we have events $E_1, E_2, E_3, E_4, E_5, E_6$ happen with a high probability as $n$ is large enough. The proofs can be found in the Appendices.

We summarize the important properties of the OJC algorithm based on the above asymptotic events.
\begin{itemize}
    \item[(a)] Let $Y = (1-\delta)^3\mu q\theta$. In Step 1 of the OJC algorithm, we have $\candidateSet
        \subset \levelset_{t+D}$, i.e., all nodes on the subgraph $g^-$ are within level $t+D$
        of the BFS tree rooted at source $s_1$.
        Based on $E_3$, all nodes on level $t+D+1$ have less than $Y$ observed infected neighbors. In
        addition, all nodes one level $l>t+D+1$ have no infected neighbors since all the infected
        nodes are within level $t+D$.
    \item[(b)] Based on event $E_5$, all nodes which are infected at or before time $t-1$
        have at least $Y$ observed infected
    neighbors. Therefore, we have $\cup_{i = 0}^{t - 1}{\cal Z}_i \subset \candidateSet$
    which implies ${\cal I}\subset\candidateSet^+$ and $g^-$ is a connected graph.
\end{itemize}
Therefore, $g^-$ is a connected graph which contains all the infected nodes and is restricted up to level $t+D$ based on $E_3$ and $E_5$.

Next we will show that $s_1\in{\cal W}^*$ by contradiction. Note, we have $e({\cal S},{\cal I}')=t$. Therefore, $e({\cal W}^*,{\cal I}')\leq t$. We will show that $e({\cal W},{\cal I}')> t$
for any set of nodes ${\cal W}$ where $|{\cal W}|= m$, $s_1\not\in{\cal W}.$ Note, all the infection
eccentricity are considered based on the subgraph $g^-$. Specifically, we will show that no nodes in
$g^-$ other than source $s_1$ can reach all nodes which are infected by source $s_1$ within $t$ time
slot based on events $E_1,E_2,E_3,E_4,E_5$ and $E_6$.

Consider the BFS tree rooted at source $s_1$. We discuss the proof in two cases.

\begin{itemize}
    \item [(a)] When $t+D\leq\lfloor m^-\rfloor$, according to the event $E_2$, the $t+D$ hops from $s_1$ is a tree
because there is not collision edges. When event $E_6$ occur, there are at least
$(1-\delta)^2\mu q$ infected nodes at level $1$ and $\forall v \in {\cal Z}^1_1, Z''^t_t(v)\geq
[(1-\delta)^2\mu q]^{t-1}(1-\delta)\theta$ which means there exists at least one observed infected node on
level $t$ for each subtree rooted at level $1$. Consider a set ${\cal W}$ where $s_1\not\in{\cal
W}$. There exists a node $u \in \tilde{\cal Z}^1_1$
such that ${\cal W} \cap \nodes(T^{-s_1}_u)\neq \emptyset$ where $T^{-s_1}_u$ is the tree rooted at
node $u$ without the branch of node $s_1$ (the subtree on level $1$). Therefore, for one node $w\in
\tilde{\cal Z}'^t_t(u)$, we have $d(w,{\cal W})> t$. Hence for any set ${\cal W}$ with size
$m$ that does not contain $s_1,$ $e({\cal W},{\cal I}')>t$. Therefore, we have $s_1\in{\cal W}^*$
when $t+D\leq\lfloor m^-\rfloor$.

Figure \ref{fig:pictorialEg} shows a pictorial example when $m = 3.$ The red nodes are the nodes in set ${\cal W}.$ The existence of node $u$ is guaranteed since $m$ nodes are insufficient to cover all level 1 branches of the BFS tree rooted in $s_1.$ The red areas are the $t$ hop neighborhood of nodes $w_1,w_2$ and $w_3.$ In this case, the $t+D$ neighborhood of node $s_1$ is a tree. Therefore, any set ${\cal W}$ does not contain $s_1$ can not reach the yellow area $\tilde{\cal Z}''^t_t(u)$ within $t$ hops as shown in the figure.

\item[(b)] Consider the case when $t+D>\lfloor m^- \rfloor$. Again, based on event $E_6$, there exists a node $u \in \tilde{\cal Z}^1_1$ such that ${\cal W} \cap \nodes(T^{-s_1}_u)\neq \emptyset$.
The distance between a node in ${\cal Z}'^t_t(u)$ and set ${\cal
W}$ on the BFS tree is larger than $t$. Therefore, if ${\cal W}$ is a Jordan infection cover, the
shortest paths between ${\cal Z}'^t_t(u)$ and set ${\cal W}$ must contain at least one collision
nodes.  In the rest of the proof, we will show that the number of collision nodes is insufficient to
provides the ``shortcuts'' to all observed infected nodes.
\end{itemize}

Define $H$ to be the total number of nodes each of which has the shortest path to ${\cal W}$ within
$t$ hops and containing at least one collision node.  If $H<\tilde{Z}'^\obsTime_\obsTime(u),$ there exists a node $w\in{\cal Z}^\obsTime_\obsTime(u)$ such that
$\dist(w,{\cal W})>t$. Therefore, ${\cal W}$ can not be the Jordan infection cover and the theorem is proved.

In the rest of the proof, we will show that $H<\tilde{Z}'^\obsTime_\obsTime(u)$. We first have the lower bound on $\tilde{Z}'^\obsTime_\obsTime(u)$ according to $E_6,$
\begin{align}
    \tilde{Z}'^\obsTime_\obsTime(u)\geq [(1-\delta)^2\mu q]^{t-1}(1-\delta)\theta\label{eqn:ztt_upperbound}
\end{align}
For each node $w_i$ in ${\cal W},$ define $H_i$ to be the total number of nodes each of which has the shortest
path to $w_i$ within $t$ hops and containing at least one collision node.
The upper bound of $H_i$ can be obtained based on Lemma 6  in \cite{ZhuYin_15_arxiv}.
\[
H_i\leq
c[(1+\delta)\mu]^{\frac{3}{4}(t+D)+\frac{1}{2}}+c[(1+\delta)\mu]^{(\frac{5}{4}-\frac{\alpha}{2})(t+D)+2},
\]
and we have
\[
    H \leq \sum_{i = 2}^m H_i \leq mc[(1+\delta)\mu]^{\frac{3}{4}(t+D)+\frac{1}{2}}+mc[(1+\delta)\mu]^{(\frac{5}{4}-\frac{\alpha}{2})(t+D)+2},
\]

 Since $\frac{1}{2}<\alpha<1,$ we have $\alpha=\frac{1}{2}+\alpha'$ where $0<\alpha'<\frac{1}{2}$ is a constant,
we have
\begin{align}
\frac{H}{\tilde{Z}'^\obsTime_\obsTime(u)} &\leq \frac{mc[(1+\delta)\mu]^{\frac{3}{4}(t+D)+\frac{1}{2}}}{[(1-\delta)^2\mu
q]^{t-1}(1-\delta)\theta}+\frac{mc[(1+\delta)\mu]^{(\frac{5}{4}-\frac{\alpha}{2})(t+D)+2}}{[(1-\delta)^2\mu
q]^{t-1}(1-\delta)\theta}\\
&=
\frac{mc}{\mu}\left(\frac{(1+\delta)^{\frac{3}{4}+\left(\frac{3D}{4}+\frac{1}{2}\right)\frac{1}{t}}}{[(1-\delta)^2q]^{1-\frac{1}{t}}(1-\delta)\theta\mu^{\frac{1}{4}-\left(\frac{3D}{4}+\frac{5}{2}\right)\frac{1}{t}}}\right)^{t}\\
&+\frac{mc}{\mu}\left(\frac{(1+\delta)^{1-\frac{\alpha'}{2}+\left[\left(1-\frac{\alpha'}{2}\right)D+2\right]\frac{1}{t}}}{[(1-\delta)^2q]^{1-\frac{1}{t}}(1-\delta)\theta\mu^{\frac{\alpha'}{2}-\left(4+\left(1-\frac{\alpha'}{2}\right)D\right)\frac{1}{t}}}\right)^{t}\label{eqn:HvsZ}
\end{align}
Since $t>>D,$ we have
\[
    t\geq \min\left\{3D + 2,\left(\frac{2}{\alpha'}-1\right)D +
    \frac{4}{\alpha'},\left(\frac{4}{\alpha'}-2\right)D + \frac{16}{\alpha'}  \right\}
\]
Therefore, Inequality (\ref{eqn:HvsZ}) becomes
\[
\frac{H}{\tilde{Z}'^\obsTime_\obsTime(u)}\leq
\frac{2mc}{\mu}\left(\frac{(1+\delta)}{[(1-\delta)^2q]\mu^{\frac{\alpha'}{4}}}\right)^{t}.
\]
Since $\mu>\frac{1}{Cq\theta}\log n$ and $\delta,q,\alpha'$ are constants, we have
\[
\frac{(1+\delta)}{[(1-\delta)^2q]\mu^{\frac{\alpha'}{4}}}<1
\]
when
\[
n> \exp\left(\frac{1}{2}\left(\frac{(1+\delta)}{(1-\delta)^2q}\right)^{\frac{4}{\alpha'}}\right).
\]
Therefore, we have
\[
\frac{H}{\tilde{Z}'^\obsTime_\obsTime(u)}\leq \frac{2mc}{\mu}\leq \epsilon',
\]
where $\epsilon'\in(0,1)$ is a constant and the inequality holds for sufficiently large $n$.
There are at least $(1-\epsilon')\tilde{Z}'^\obsTime_\obsTime(u)$ nodes which cannot be
reached from ${\cal W}$ with $t$ hops on $g^-$. Hence we have $\ecce({\cal I}',{\cal W})>t$.
Therefore, we proved that $s_1\in{\cal W}^*$ with a high probability when $n$ is sufficiently large, i.e., we have
\[
    \Pr(s_1\in{\cal W}^*)\geq 1-\frac{\epsilon}{m}
\]
since $m$ is a constant. Then, by applying the union bound, we have, for sufficiently large $n,$
\[
    \Pr(s_i\in{\cal W}^*,\quad\forall i = 1,\cdots,m)\geq 1 - \epsilon
\]
Note, we have $|{\cal W}^*|= m$. Therefore, we have
\[
    \Pr({\cal S} = {\cal W}^*)\geq 1 - \epsilon
\]
Hence, the Jordan infection cover equals the actual source set with a high probability.
\end{proof}

We now briefly explain the conditions.  Recall that $\mu$ is the average node degree, $q$ is the lower bound on the infection probability and $\theta$ is the lower bound on the reporting probability, so $\mu q \theta$ is a lower bound on the average number of observed infected neighbors of a node that was infected before time slot $t.$ Therefore, condition (c1) requires that this lower bound is $\Omega(\log n),$ and condition (c2) requires that the threshold used in the candidate selection algorithm is a constant fraction of the average number of observed infected neighors. Applying the Chernoff bound, conditions (c1) and (c2) together yield the following conclusions:
\begin{itemize}
\item[(i)]  any node who was infected before or at time slot $t-1$ (hence, including the sources) will be selected into the candidate set with a high probability,

\item[(ii)] any node that is $t + D + 1$ hops away from the set of sources will not have $Y$ or more observed infected neighbors with a high probability, and

\item[(iii)] any node that is more than $t+D+1$ hops away from the set of sources will not have any observed infected neighbors.
\end{itemize}
Based the above facts, with a high probability, the candidate set includes all nodes who were infected at $t - 1$ or earlier, and any node in $g^-$ is at most $t+D$ hops away from all sources.

Condition (c3) is on the infection duration. We first restrict $\obsTime=\omega(D)$ so that the infection subgraphs starting from different sources are likely to overlap and form a connected component. This is a more interesting regime than the one in which infection subgraphs are disconnected from each other.  $\lim\sup_{n\rightarrow \infty} \frac{\obsTime}{ \frac{\log n}{\log\mu}}<\frac{2}{3}$  is a critical condition. The intuition why it is required is explained below.  Figure \ref{fig:pictorialEg} provides a pictorial explanation of the proof. The picture illustrates the breadth-first-search (BFS) tree $\tree^\dag$ rooted at source $s_1$ with height $t+D,$ where $s_1$ is one of the $m$ sources. The nodes in orange are the observed infected nodes whose infection was originated from $s_1.$ The blue nodes are unobserved nodes. A node is said to be on level $i$ of the BFS if its hop-distance to $s_1$ is $i.$ Assume $m=3$ and consider a set of three nodes who are within $t+D$ hops from $s_1$ but not includes $s_1$ (e.g., ${\cal W} = \{w_1, w_2, w_3\}$ in Figure \ref{fig:pictorialEg}). Suppose the infection eccentricity of ${\cal W}$ is $\leq t.$ Since $s_1$ has a sufficient number of neighbors according to the definition of $\mu,$ with a high probability, there exists a subtree of $\tree^\dag$ starting from an offspring of $s_1,$ which does not include any node in ${\cal W}$. Assume $u$ is the root of such a subtree in Figure \ref{fig:pictorialEg}. The yellow area in Figure \ref{fig:pictorialEg} includes the level-$t$ observed infected nodes on subtree $T_u^{-s_1}.$ Any path from $w_1,$ $w_2$ or $w_3$ to the yellow area, formed by edges on $\tree^\dag,$ must have hop-distance larger than $t.$ Therefore, if the infection eccentricity of ${\cal W}$ is at least $t,$ there must exist a path from ${\cal W}$ to each of the nodes in the yellow area with hop-distance $\leq t,$ and such a path must include at least one edge which is not in $\tree^\dag$ (we call these edges \emph{collision edges}). In the detailed analysis, we will prove that with a high probability, \textcolor{black}{the number of nodes within $t$ hops from $\cal W$ via the collision edges} is order-wise smaller than the number of nodes in the yellow area when (c3) holds. Therefore, the Jordan cover has to include $s_1$. The same argument applies to other sources.

\subsection{Impossibility results}\label{sec:ImpossibilityResult}
Theorem 5 in \cite{ZhuYin_16} presents the conditions under which it is impossible to identify the single source under the IC diffusion on the ER random graph, which is a special case of the model in this paper.  Assuming SI or IC model, based on Lemma 1 in \cite{ZhuYin_16}, we have that with a high probability, all nodes of the ER graph become infected when the infection duration is at least
\[
t_u \triangleq \left\lceil\frac{\log n}{\log \mu + \log q}\right\rceil +2.
\]
When this occurs, it is impossible to detect the sources since the nodes are indifferentiable.
\begin{thm}\label{thm:impossibility}
Assume the multi-source diffusion follows the IC or SI model. If $24\log n<q\mu <<\sqrt{n}$ and $q$ is a constant independent of $n,$ then
	\[
	\lim_{n\rightarrow \infty}\Pr({\cal I}=\nodes(g))=1
	\]
	when the observation time $t\geq t_u.$	In other words, the entire network is infected after $t_u$ with a high probability. In such a case, the probability of finding the sources diminishes to zero as $n\rightarrow \infty.$
\end{thm}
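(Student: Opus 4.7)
The plan is to reduce the multi-source statement to the single-source impossibility result (Lemma~1 of \cite{ZhuYin_16}) via a standard monotone coupling, and then argue that once the entire network is infected, the partial snapshot carries no information about which nodes were the sources.

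First, I would construct a coupling between the $m$-source SI/IC process and the single-source SI/IC process rooted at $s_1$ on the \emph{same} realization of the ER graph $g$ and the same per-edge infection coins (and per-node recovery coins, for IC). By monotonicity of the SI/IC dynamics in the initial infected set, the $m$-source infected set at any time $t$ contains the single-source infected set at time $t$. Hence it suffices to show $\Pr\!\big(\nodes(g) \subset {\cal Z}_{t_u}^{(s_1)}\big) \to 1$, where ${\cal Z}_{t_u}^{(s_1)}$ denotes the single-source infected set rooted at $s_1$ at time $t_u$. This is exactly the content of Lemma~1 in \cite{ZhuYin_16} under the hypothesis $24\log n < q\mu \ll \sqrt{n}$ with $q$ constant, applied to $s_1$.

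Second, I would prove that detection becomes impossible on the event $\{{\cal I}=\nodes(g)\}$. On this event, the partial observation $\infObs$ is produced by independent Bernoulli$(\theta_v)$ trials over \emph{all} nodes, so the conditional law of $\infObs$ given $\sourceSet={\cal W}$ and $\{{\cal I}=\nodes(g)\}$ is the same for every $m$-subset ${\cal W}\subset\nodes(g)$. Because the ER graph is exchangeable under vertex relabeling and the prior over $\sourceSet$ is uniform over $m$-subsets, the posterior $\Pr(\sourceSet={\cal W}\mid \infObs, {\cal I}=\nodes(g))$ is also uniform over the $\binom{n}{m}$ subsets. Consequently, for any (possibly randomized) estimator $\widehat{\cal W}(\infObs)$,
\begin{equation*}
\Pr\!\big(\widehat{\cal W}=\sourceSet \,\big|\, {\cal I}=\nodes(g)\big) \leq \binom{n}{m}^{-1} \to 0,
\end{equation*}
since $m$ is a constant. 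Combining this with $\Pr({\cal I}=\nodes(g))\to 1$ from the first step and using the trivial bound $\Pr(\widehat{\cal W}=\sourceSet)\leq \Pr(\widehat{\cal W}=\sourceSet\mid{\cal I}=\nodes(g)) + \Pr({\cal I}\neq\nodes(g))$ yields the impossibility claim.

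The main obstacle is the indistinguishability step: one has to argue cleanly that all $\binom{n}{m}$ source sets are a posteriori equally likely. This requires using the vertex-exchangeability of the ER measure together with the fact that, conditional on the final infected set being \emph{all} of $\nodes(g)$, the distribution of the partial snapshot is source-independent. The spreading step itself reduces to a direct invocation of the existing single-source lemma, so no new concentration calculations are needed.
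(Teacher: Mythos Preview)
Your proposal is correct and follows exactly the paper's route: the paper's entire justification is the two sentences preceding the theorem, namely invoke Lemma~1 of \cite{ZhuYin_16} to get $\Pr({\cal I}=\nodes(g))\to 1$ and then observe that ``the nodes are indifferentiable.'' Your monotone-coupling reduction from $m$ sources to one source and your posterior-uniformity argument simply make explicit what the paper leaves as a one-line remark, so there is no methodological difference.
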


\section{Simulations}\label{sec:Simulation}
\begin{figure*}
\centering
\begin{subfigure}[b]{0.25\textwidth}
\includegraphics[width=\textwidth]{./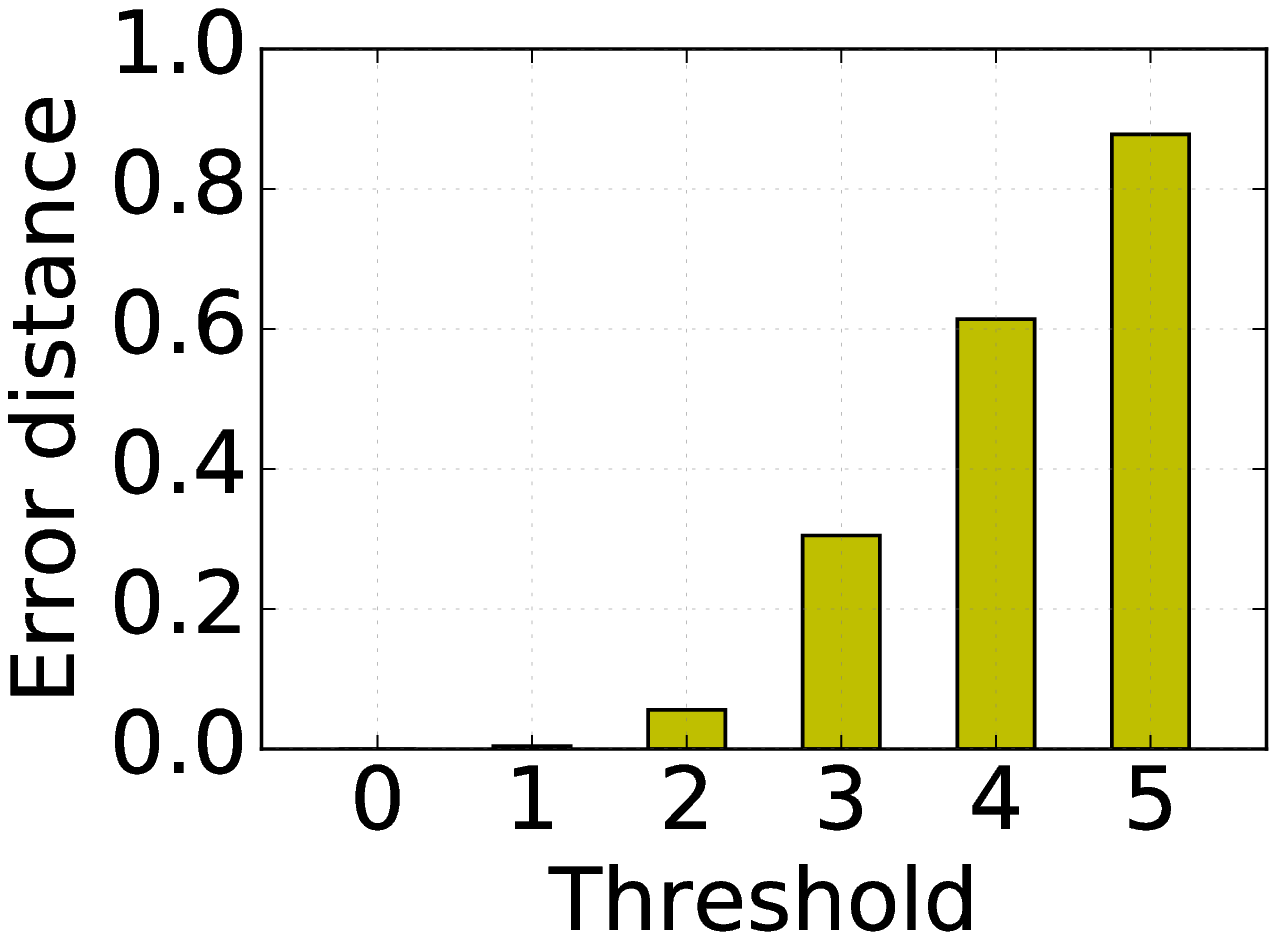}
\caption{Error distance}
\label{fig:er-o-ed}
\end{subfigure}
\hspace{1em}
\begin{subfigure}[b]{0.25\textwidth}
\includegraphics[width=\textwidth]{./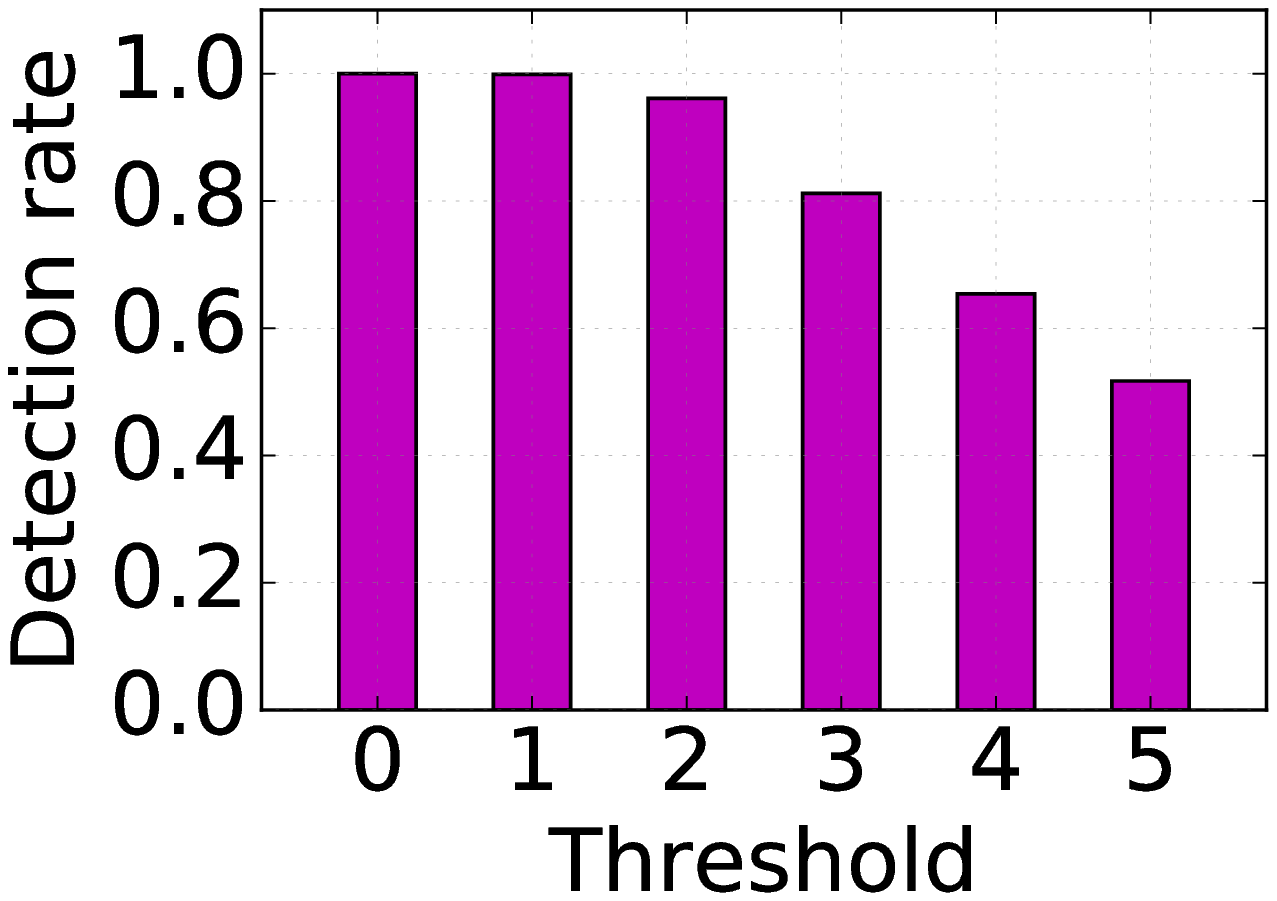}
\caption{Detection rate}
\label{fig:er-o-dr}
\end{subfigure}
\hspace{1em}
\begin{subfigure}[b]{0.25\textwidth}
\includegraphics[width=\textwidth]{./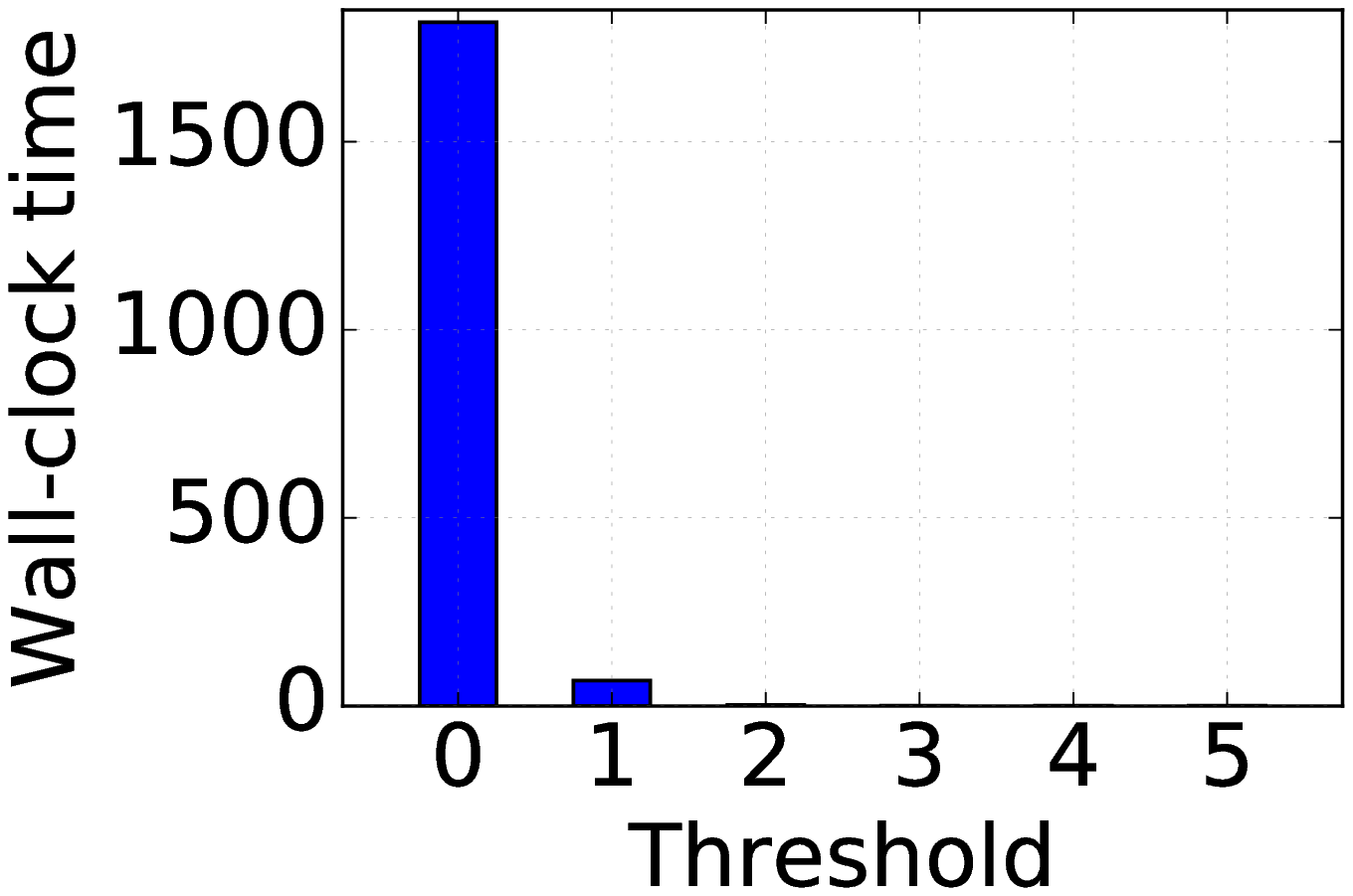}
\caption{Wall-clock time}
\label{fig:er-o-t}
\end{subfigure}
\caption{Performance of OJC with different threshold values on the ER random graph}
\label{fig:er_optimal}
\end{figure*}
\begin{figure*}
\centering
\begin{subfigure}[b]{0.25\textwidth}
%\captionsetup{justification=raggedleft, singlelinecheck=false}
\includegraphics[width=\textwidth]{./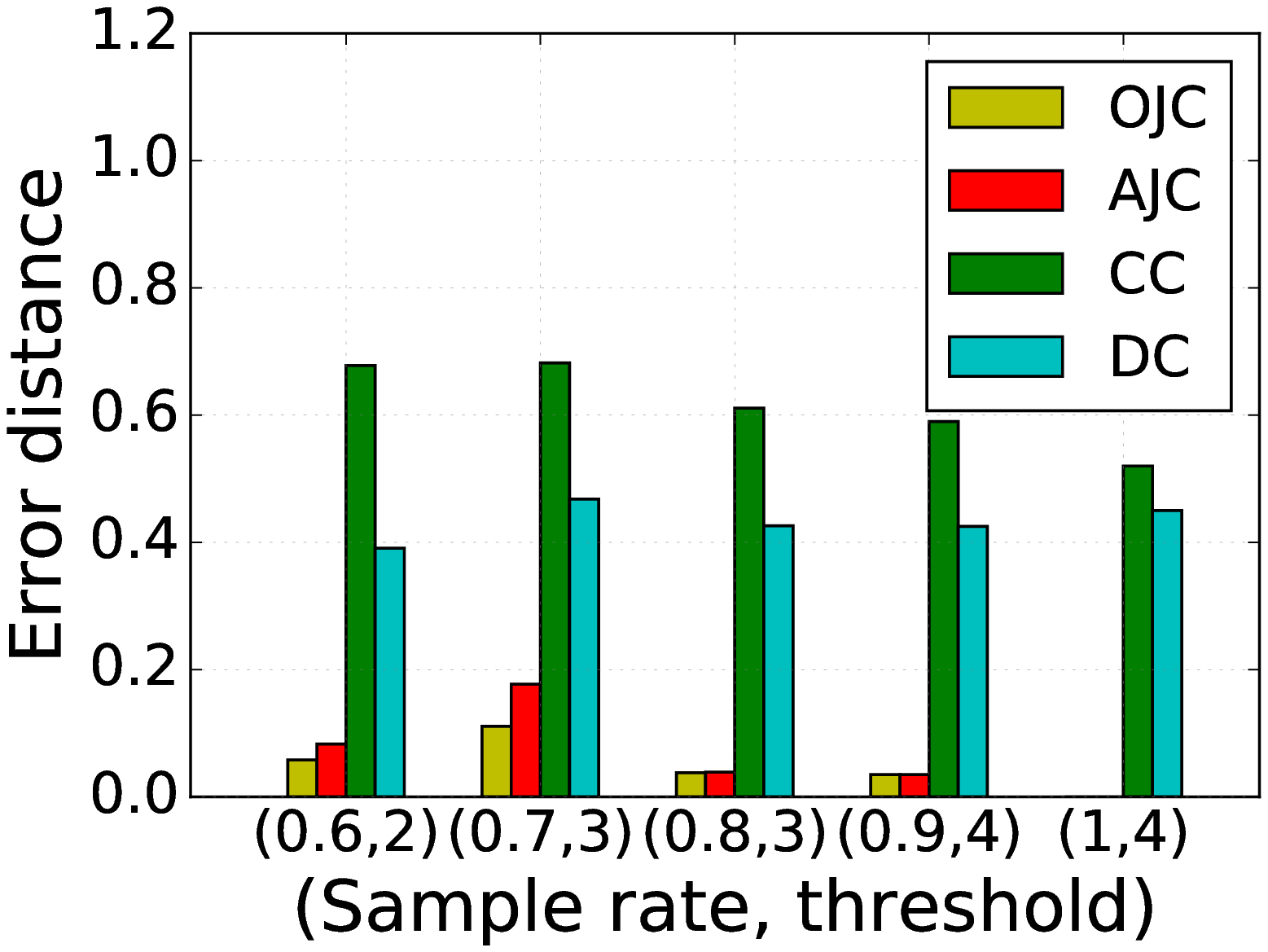}
\includegraphics[width=\textwidth]{./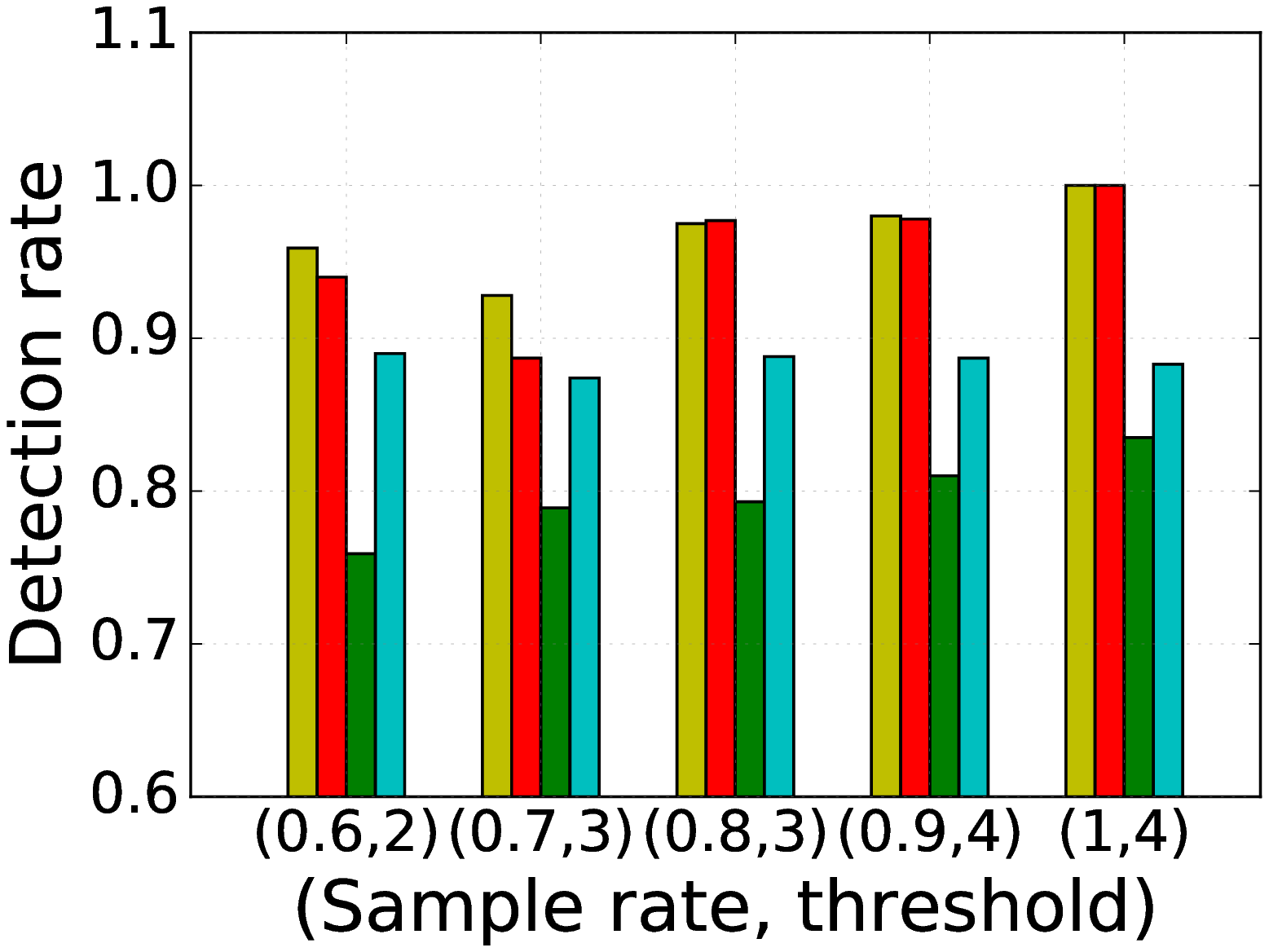}
\caption{Source number: 2, infection size: $100\sim 300.$}
\label{fig:er2}
\end{subfigure}
\hspace{1em}
\begin{subfigure}[b]{0.25\textwidth}
%\captionsetup{justification=raggedleft, singlelinecheck=false}
\includegraphics[width=\textwidth]{./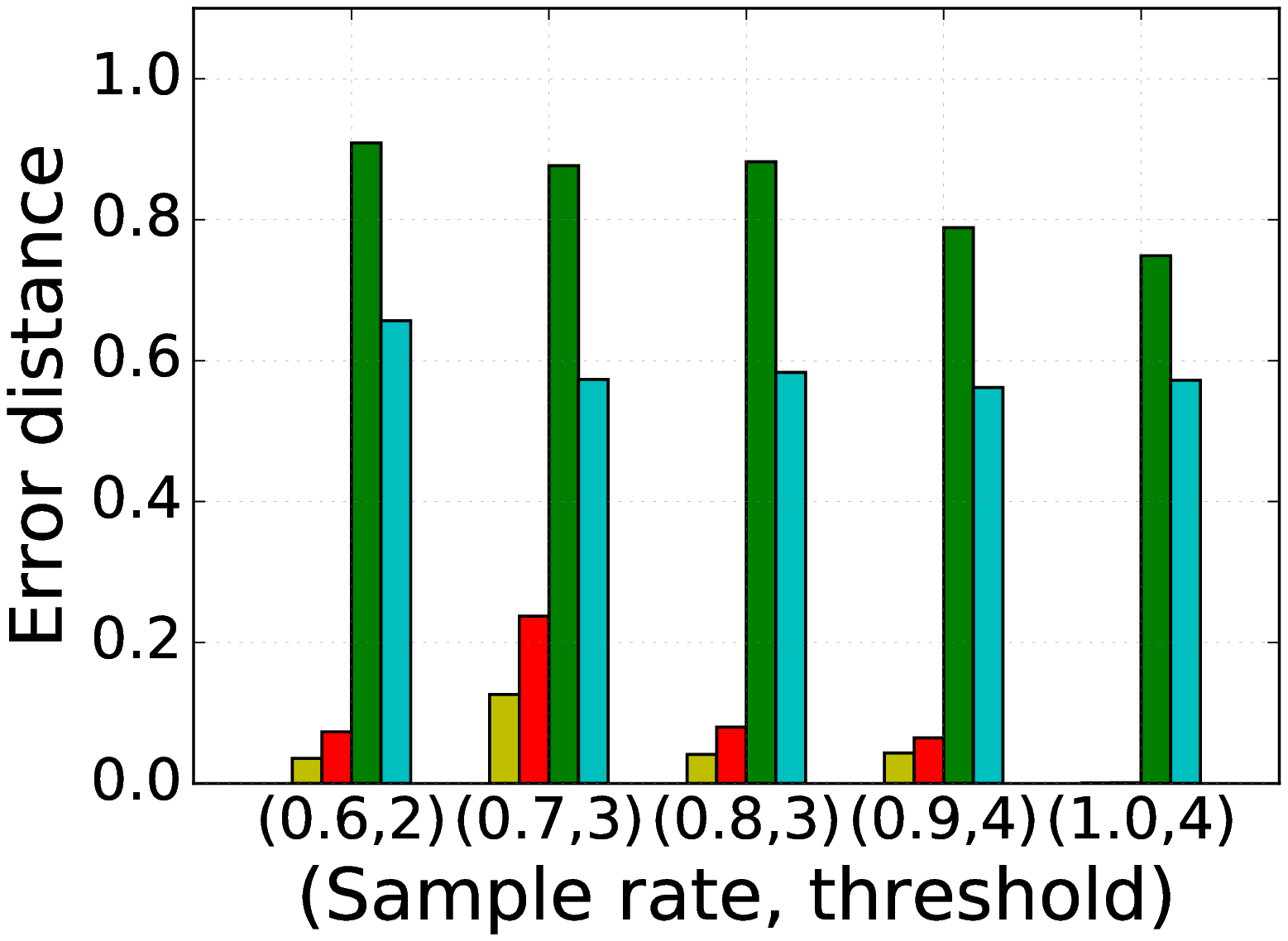}
\includegraphics[width=\textwidth]{./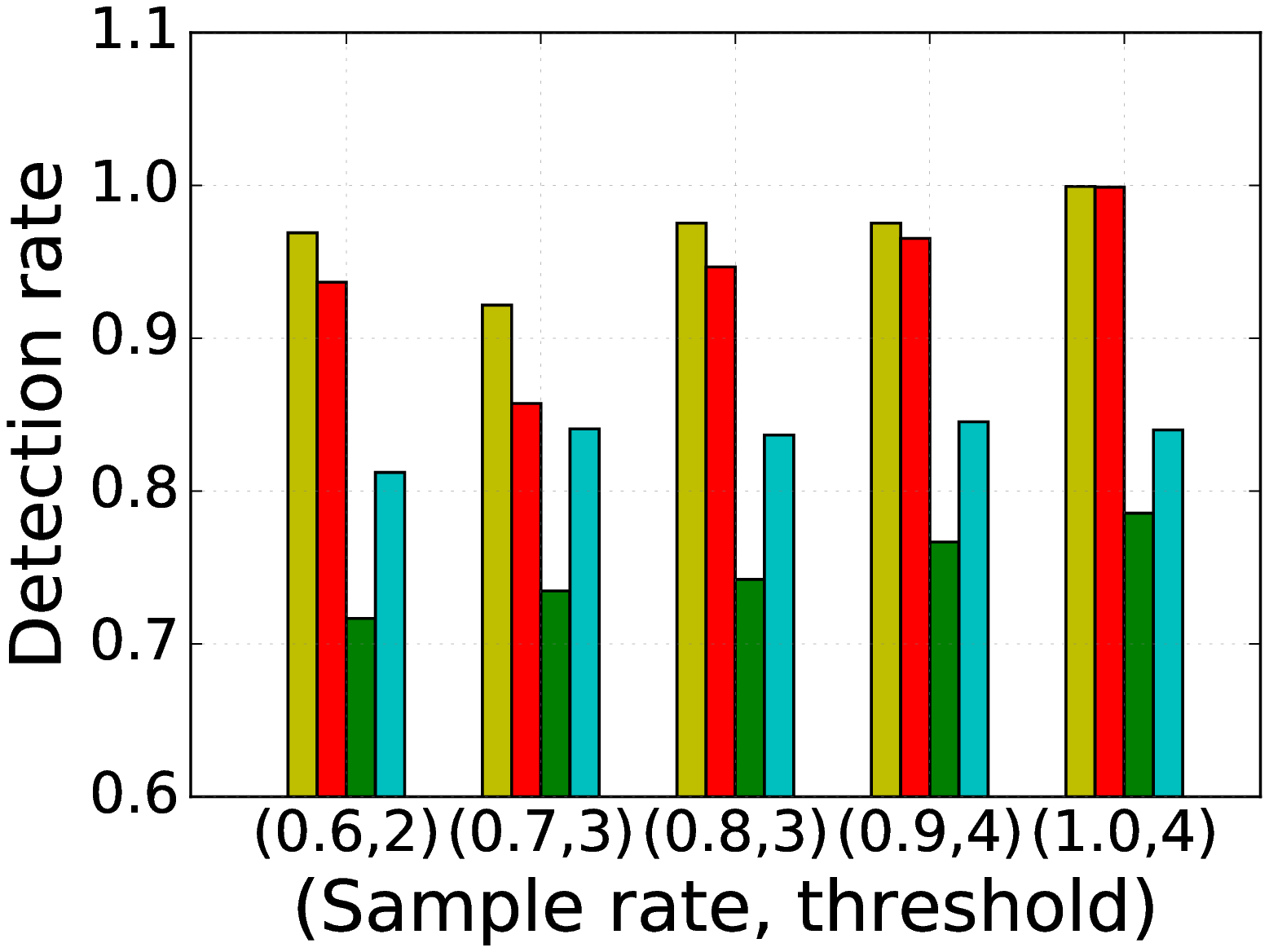}
\caption{Source number: 3, infection size: $200\sim 400.$}
\label{fig:er3}
\end{subfigure}
\hspace{1em}
\begin{subfigure}[b]{0.25\textwidth}
%\captionsetup{justification=raggedleft, singlelinecheck=false}
\includegraphics[width=\textwidth]{./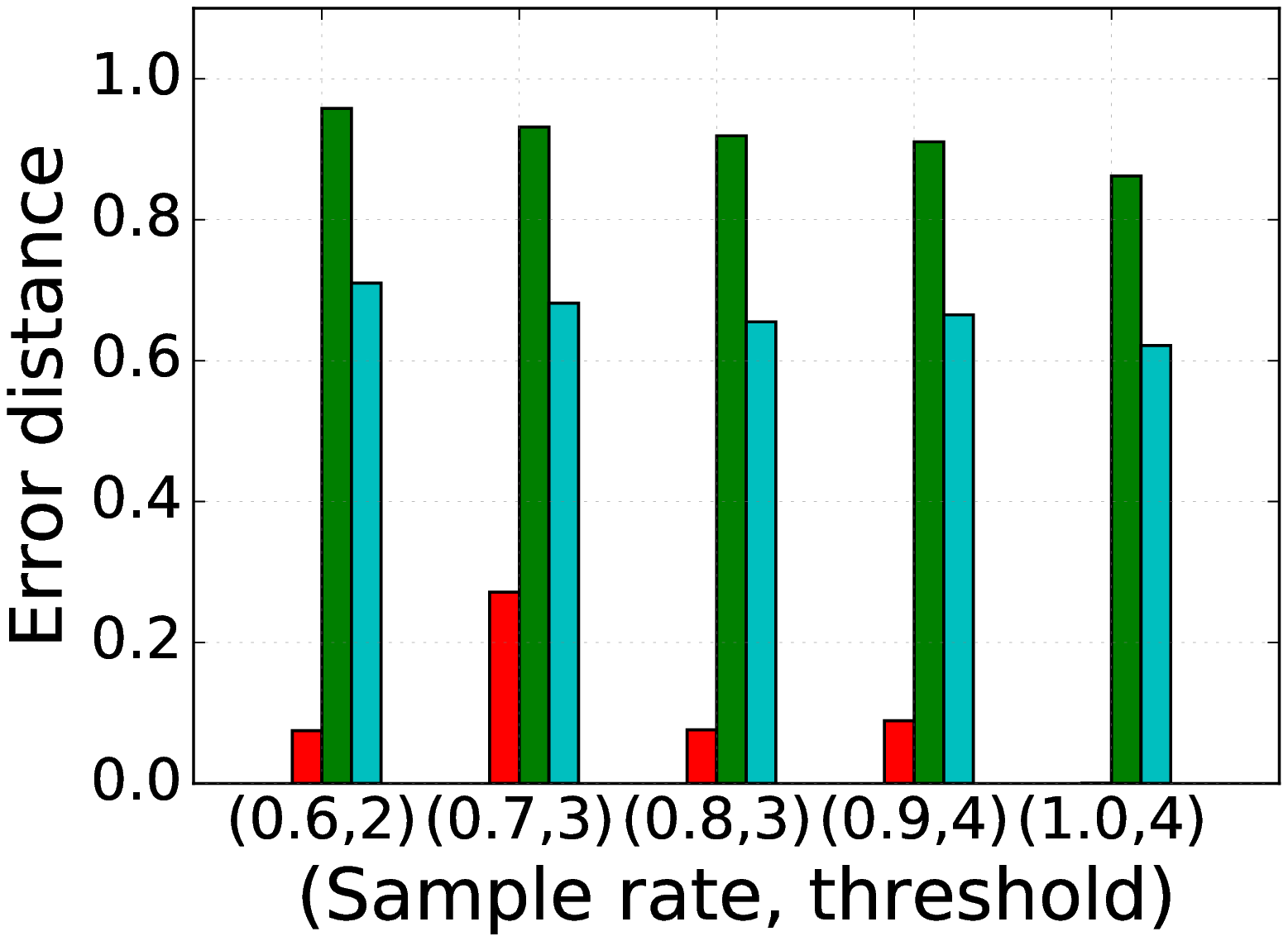}
\includegraphics[width=\textwidth]{./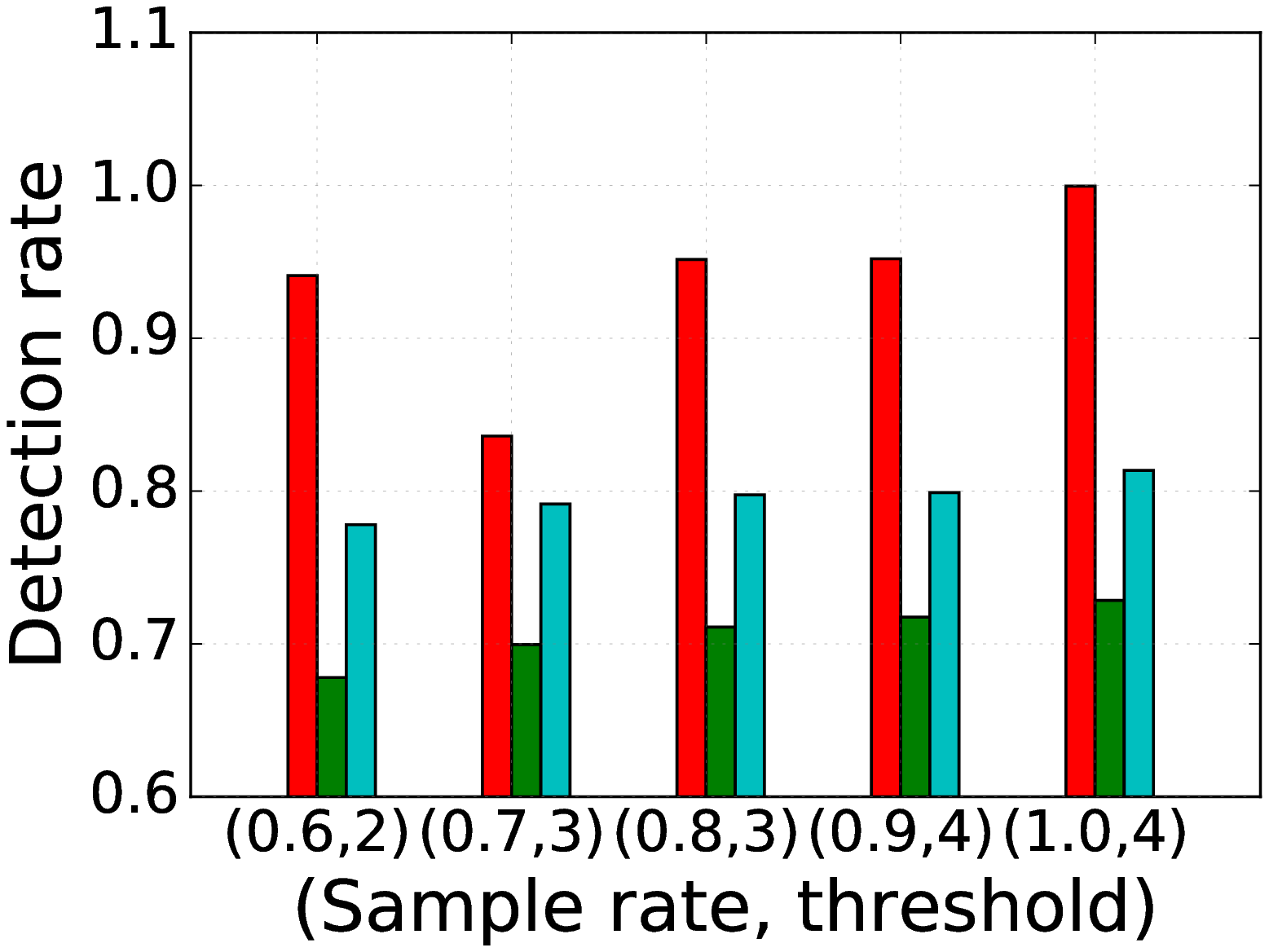}
\caption{Source number: 4, infection size: $300\sim 500.$}
\label{fig:er4}
\end{subfigure}
\caption{The Performance of OJC, AJC, CC and DC on the ER random graph with different sample rates and threshold values}
\label{fig:er}
\end{figure*}

\begin{figure*}
\centering
\begin{subfigure}[b]{0.25\textwidth}
%\captionsetup{justification=raggedleft, singlelinecheck=false}
\includegraphics[width=\textwidth]{./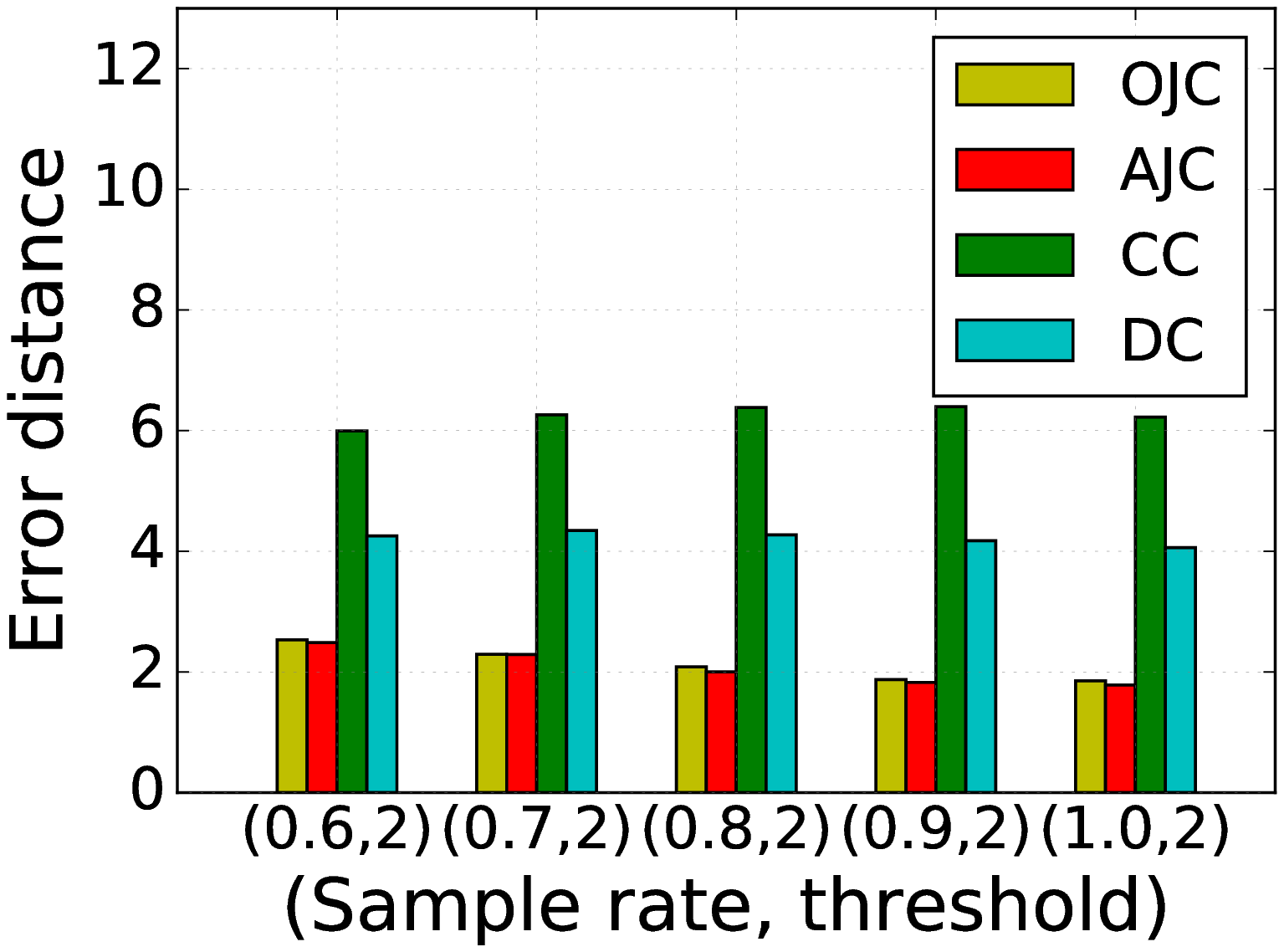}
\includegraphics[width=\textwidth]{./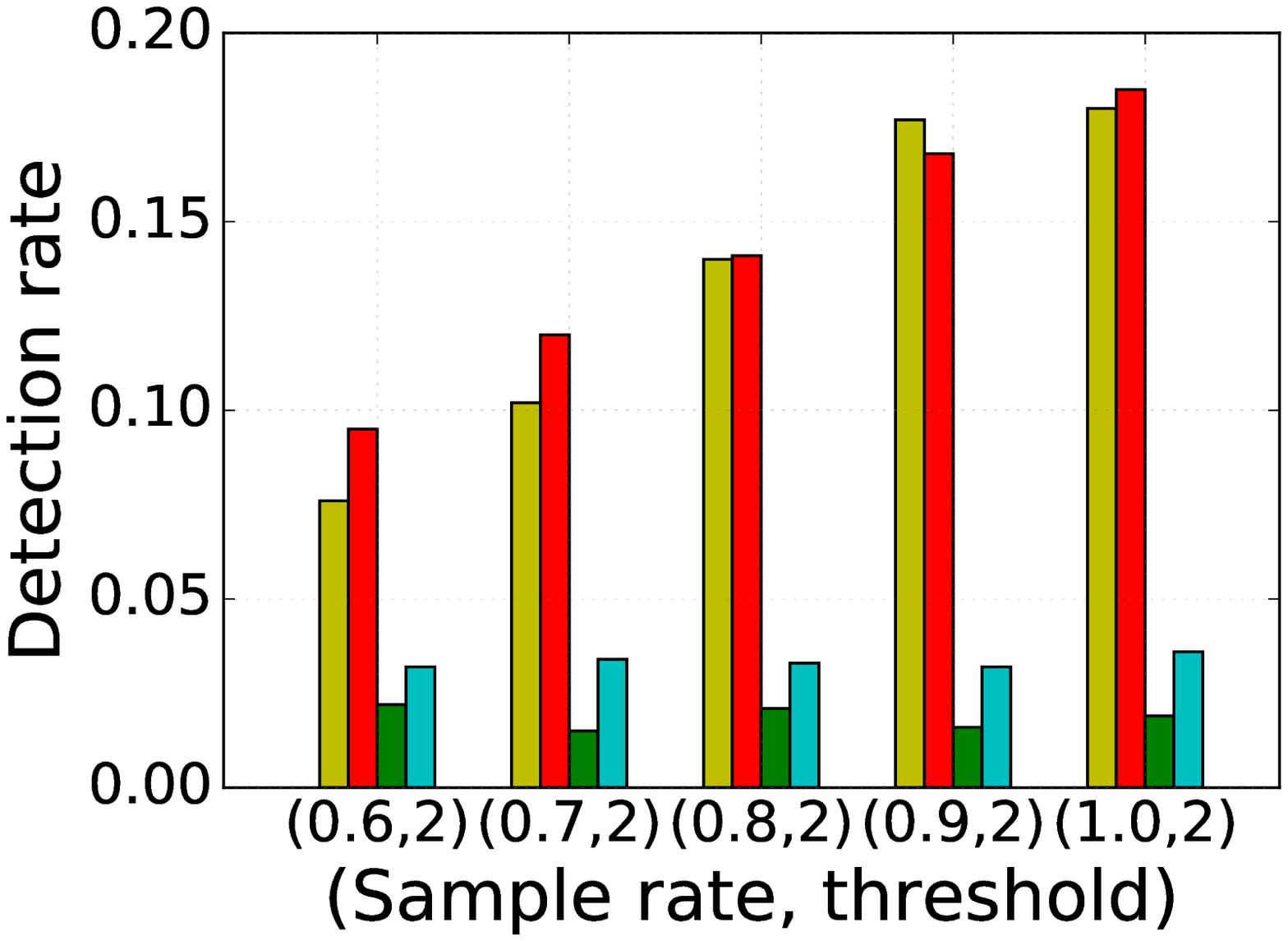}
\caption{Source number: 2, infection size: $100\sim 300.$}
\label{fig:pg2p1}
\end{subfigure}
\hspace{1em}
\begin{subfigure}[b]{0.25\textwidth}
%\captionsetup{justification=raggedleft, singlelinecheck=false}
\includegraphics[width=\textwidth]{./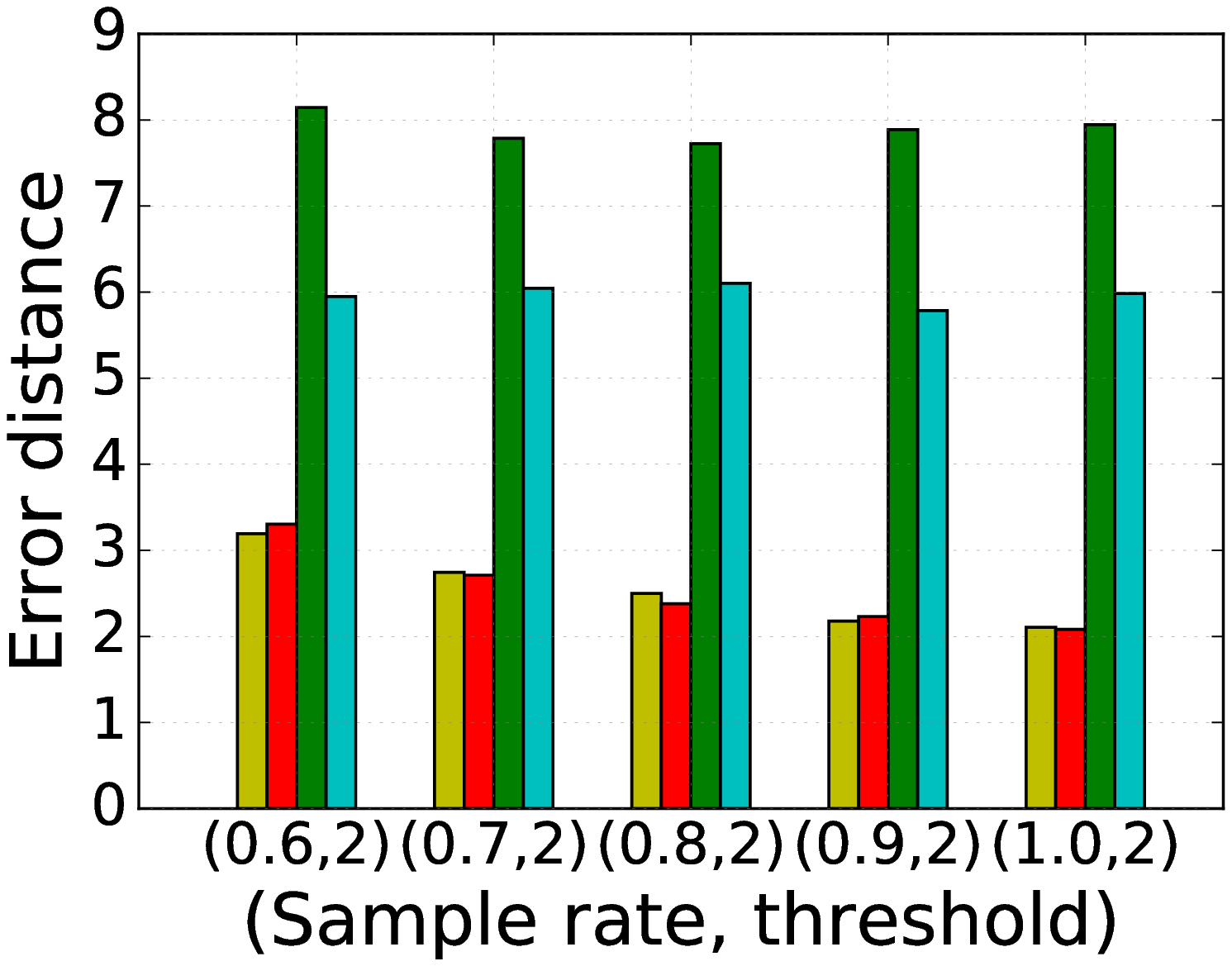}
\includegraphics[width=\textwidth]{./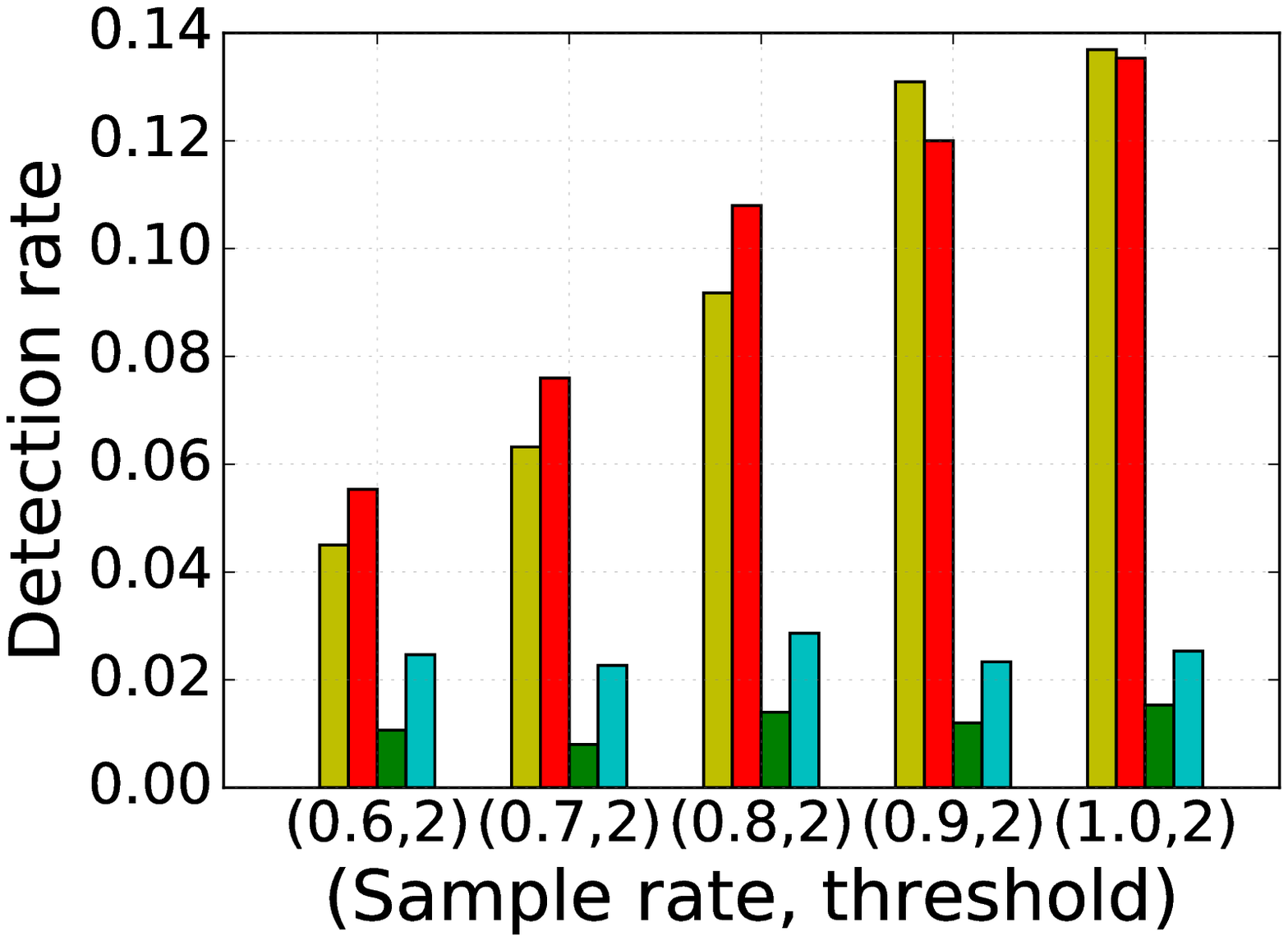}
\caption{Source number: 3, infection size: $200\sim 400.$}
\label{fig:pg3p1}
\end{subfigure}
\hspace{1em}
\begin{subfigure}[b]{0.25\textwidth}
%\captionsetup{justification=raggedleft, singlelinecheck=false}
\includegraphics[width=\textwidth]{./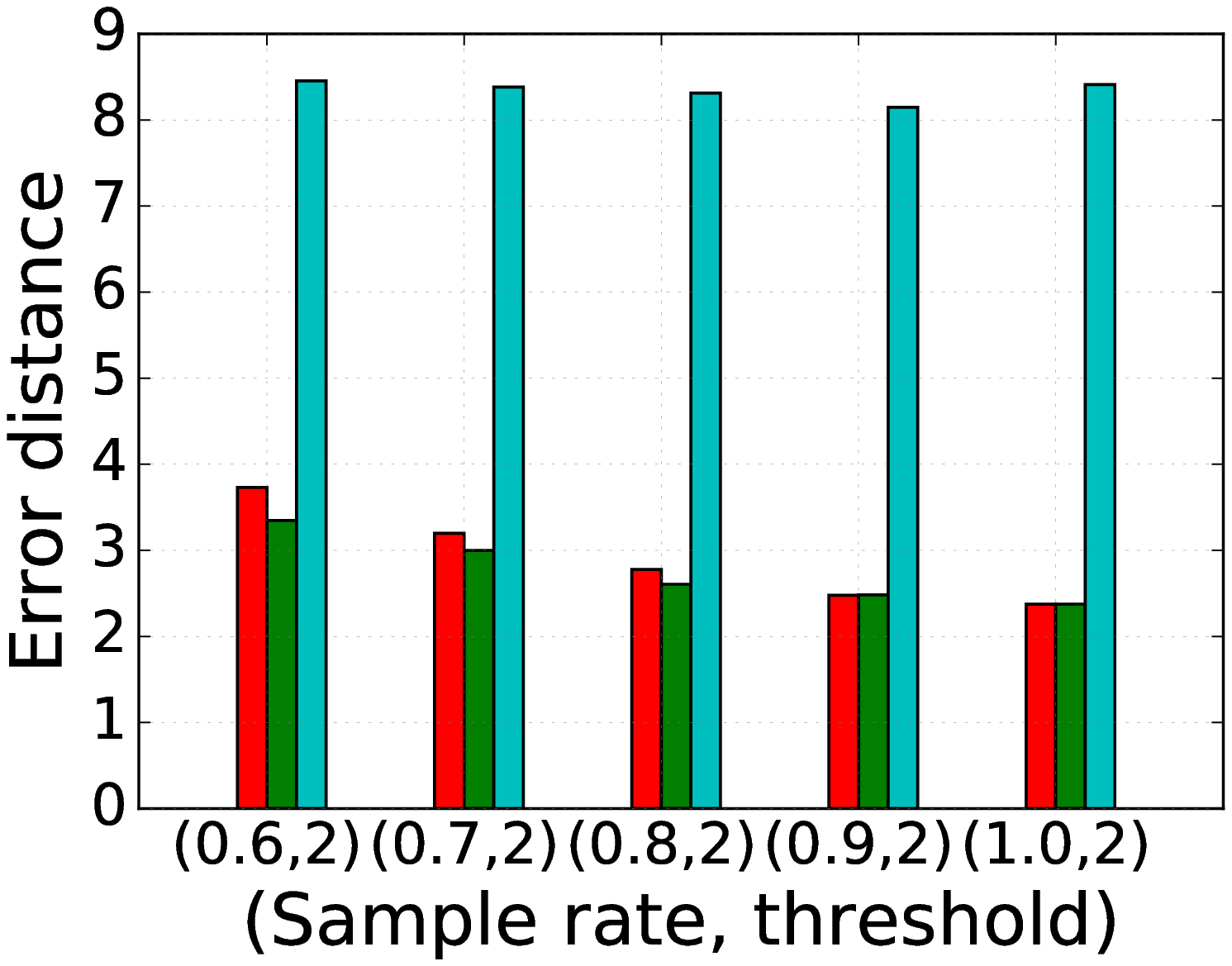}
\includegraphics[width=\textwidth]{./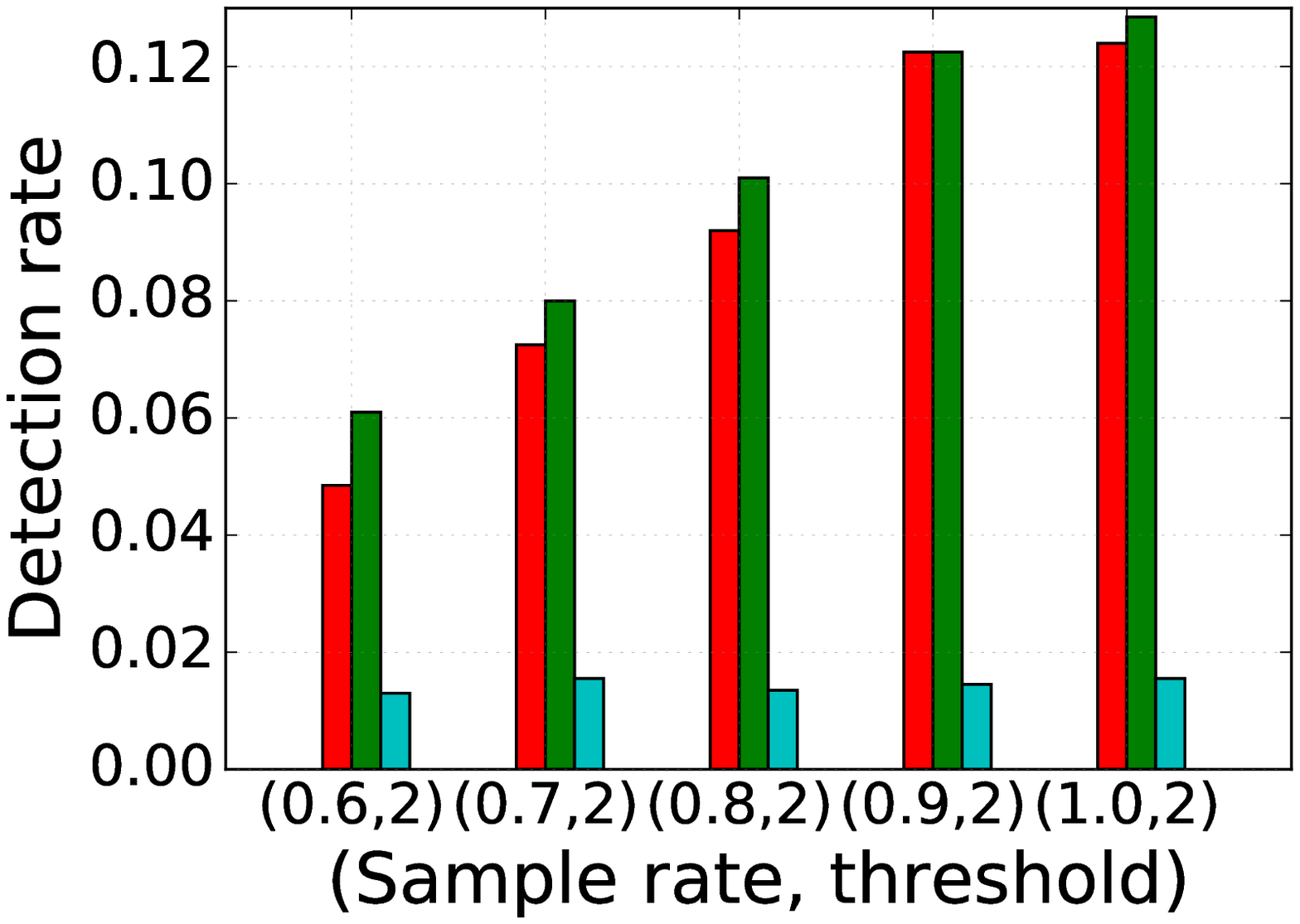}
\caption{Source number: 4, infection size: $300\sim 500.$}
\label{fig:pg4p1}
\end{subfigure}
\caption{The Performance of OJC, AJC, CC and DC on the power grid network with different sample rates and threshold values}
\label{fig:pg}
\end{figure*}

In this section, we evaluated the performance of our algorithms via simulations. The performance metrics used in this paper include:
\begin{itemize}
\item Error distance: The error distance is defined to be
\begin{equation*}
\min_{{\cal P}\in \text{permutation}(\estimatorSet)} \sum_{i=1}^m \frac{d(s_i,p_i)}{m},
\end{equation*}
where $s_1,s_2,\dots,s_m$ are the real sources, $\estimatorSet$ is the set of detected sources and ${\cal P}=(p_1,p_2,...,p_m)$ is a permutation of $\estimatorSet.$
\item Detection rate: The detection rate is defined as $$\frac{|\sourceSet \cap \estimatorSet|}{m}.$$
\end{itemize}
We compared our algorithms with two heuristic algorithms (DC and CC) based on K-Means, which have been used for comparison in \cite{LuoTayLen_14}. The algorithms proposed in \cite{LuoTayLen_14} and \cite{CheZhuYin_14} are the same as AJC without candidate selection. In both DC and CC, the initial centroids are randomly chosen. During the clustering step of each iteration in K-Means, we selected distance centroid of each cluster in DC and selected closeness centroid in CC, where distance centroid is defined as $\arg\min_{v\in \cal C} \sum_{u\in \cal C\cap \infObs} d(v,u),$ where $\cal C$ is the set of nodes in the cluster. Closeness centroid is defined as $\arg\max_{v\in \cal C} \sum_{u\in {\cal C}\cap \infObs, u\neq v}\frac{1}{d(u,v)}.$ The following experiments were conducted on an server with 8 Intel Xeon X3450 CPUs and 16G RAM with Linux 64 bit system. All algorithms were implemented with Python 2.7.

{
\color{black}
\subsection{OJC with different thresholds}
In Figure \ref{fig:er_optimal}, we evaluated OJC on the ER random graph. In the experiments, we generated an ER random graph with  $5,000$ nodes and wiring probability $0.002.$    We used the homogeneous SI model for diffusion with infection probability $0.8.$  In this experiment, we limited the infection network size to be $100\sim 300$ and the number of sources to be 2 due to the computational complexity of the OJC algorithm.  Figure \ref{fig:er_optimal} shows the performance of OJC with different thresholds.  From the results, the detection rate is close to one and the error distance is close to zero under OJC with threshold 0 or 1. However, the running time is 1,817 seconds versus 68 seconds. So the candidate selection algorithm with threshold one results in 27$\times$ reduction of the running time. When the threshold increases 2, the running time reduces to 3 seconds, which is a $600\times$ reduction of the running time. Both the detection rate and the error distance became slightly worse in this case. The detection rate in this case is 0.961  and the error distance is 0.056.

\subsection{OJC, AJC and other heuristics}
We further evaluated the performance of OJC and AJC on both the power grid network \cite{WatStr_98} and  ER random graph (size: 5000, wiring probability: 0.002) and compared them with DC and CC heuristics. We used the homogeneous SI model with infection probability $0.8$ to generate the diffusion sequences. For AJC/CC/DC, for each diffusion sequence, we repeated the algorithm 100 times from different initial conditions and chose the source set with the smallest the smallest-infection-eccentricity/largest-closeness-centrality/smallest-distance-centrality.
In Figure \ref{fig:er} and \ref{fig:pg}, the $x-$axis represents the combinations of sample rate and threshold. On the ER random graph, we increased the threshold as the sample rate increased to control the running time. For the power-grid network, since the average node degree is only 2, we set threshold equal to 2 for experiments for all sample rates. As we can see from the figures that when fixing the threshold, the performance of all algorithms (in terms of both error distance and detection rate) improves as the sample rate increases because we had more information about the diffusion. From Figure \ref{fig:er} and \ref{fig:pg}, we can also see that AJC outperforms DC and CC, and has similar performance with OJC. Note that with four sources, OJC became very slow on both the ER random graph and the power grid network because its complexity increases exponentially in the number of sources. So for the cases with four sources, we only simulated AJC.  
}

\section{Conclusions}
In this paper, we studied the problem of detecting multiple diffusion sources under the heterogeneous SIR model with incomplete observations. We defined a concept called Jordan cover and developed the OJC algorithm based on that. Our theoretical analysis showed that OJC finds the set of sources in the ER random graph with probability one asymptotically under mild conditions. To the best of our knowledge, this is the first theoretic performance guarantee for multiple information sources detection in non-tree networks. Since \textcolor{black}{the computational complexity of OJC is polynomial in $n$ but exponential in $m$}, we proposed a heuristic algorithm --- the AJC algorithm. Our simulation results showed that OJC and AJC algorithms have similar performance and both significantly outperform  existing algorithms. 

\section{Acknowledgments}
This work was supported in part by the U.S. Army Research
Laboratory’s Army Research Office (ARO Grant No. W911NF1310279) and by
the Defense Threat Reduction Agency (DTRA Grant HDTRA1-16-0017).

\bibliographystyle{plain}
\bibliography{inlab-refs}
\newpage
\appendix
\appendixpage

\section{Proof of $E_1$ and $E_2$}
\begin{lemma}\label{lem:E1E2}
    Assume the conditions in Theorem 2 hold, for any $\epsilon > 0$, we have
    \[
        \Pr(E_1)\geq  1-\epsilon,
    \]
    and
    \[
        \Pr(E_2|E_1)\geq 1-\epsilon,
    \]
    for suffciently large $n$.
\end{lemma}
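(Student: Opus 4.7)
The plan is to establish both events via Chernoff concentration combined with union bounds, exploiting the principle of deferred decisions on the ER random graph. Condition (c1) gives $\mu = \Omega(\log n)$, which makes each Chernoff tail super-polynomially small, while condition (c3) forces the BFS tree through level $\obsTime+D$ to contain only $n^{o(1)}$ nodes, so the union bounds remain affordable.

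For $E_1$, I would expose the BFS tree rooted at $s_1$ level by level. When a node $v$ on level $k$ is processed, deferred decisions tell us that each of the $n-l_k$ unvisited vertices is joined to $v$ by an independent Bernoulli$(p)$ edge, so $\offspringsize'(v) \sim \hbox{Bi}(n-l_k,p)$. An inductive argument---showing that with high probability $l'_k \leq [(1+\delta)\mu]^k$ at each previous level---gives $n - l_k = n(1-o(1))$ throughout $k \leq \obsTime+D$, so the binomial parameter is $\mu(1+o(1))$. A Chernoff bound then yields $\Pr(|\offspringsize'(v) - \mu| > \delta \mu) \leq 2\exp(-c\delta^2\mu) = \exp(-\Omega(\log n))$. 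Since $|\levelset_{\obsTime+D}| \leq [(1+\delta)\mu]^{\obsTime+D} = n^{o(1)}$ by (c3), a union bound over all such $v$ (and all $k \leq \obsTime + D$) closes the induction and gives $\Pr(E_1) \geq 1-\epsilon$ for all sufficiently large $n$.

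For $E_2 \mid E_1$, a collision edge at level $j$ is an ER edge other than a BFS-tree edge whose endpoints both lie in $\levelset_j$. Conditional on the exposed tree through level $j$ and on $E_1$, there are at most $\binom{l_j}{2} \leq 2[(1+\delta)\mu]^{2j}$ candidate non-tree pairs, and deferred decisions again ensure each is present with independent probability $p$. Thus $\mathbb{E}[\collisionsize_j \mid E_1] = O([(1+\delta)\mu]^{2j+1}/n)$. For $j \leq \lfloor m^-\rfloor$, the definition of $m^-$ makes this expectation $O(1/\mu)$, so Markov's inequality yields $\Pr(\collisionsize_j \geq 1 \mid E_1) = O(1/\mu)$. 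For $\lfloor m^-\rfloor < j < \lceil m^+\rceil$ the mean is $\Theta(\mu)$ and a Chernoff bound for sums of independent Bernoullis gives $\Pr(\collisionsize_j > 8\mu \mid E_1) \leq \exp(-\Omega(\mu)) = \exp(-\Omega(\log n))$. For $\lceil m^+ \rceil \leq j \leq \frac{\log n}{(1+\alpha)\log\mu}$ the stated upper bound is twice the mean up to constants, and a multiplicative Chernoff bound on deviation from the mean again produces a sub-polynomial failure probability. Summing over the $O(\log n/\log\mu)$ relevant levels and combining the three regimes yields $\Pr(\bar{E}_2 \mid E_1) \leq \epsilon$ for $n$ large.

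The main obstacle is the dependence structure induced by BFS: the edges already used to build the tree are correlated with which pairs can form collision edges, so one cannot apply Chernoff blindly to an undisclosed pool of edges. The resolution is to keep the randomness compartmentalized via deferred decisions---at each BFS step only the edges strictly needed are revealed, and all un-queried pairs remain i.i.d.\ Bernoulli$(p)$ conditional on the history. Once this bookkeeping is in place, the remainder is routine: standard Chernoff plug-ins for each level together with the verification that $n^{o(1)}$-many levels of $n^{o(1)}$-many nodes contribute only a negligible total union-bound loss.
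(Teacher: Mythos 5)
Your proposal is essentially the paper's proof: the paper itself gives no in-text argument for Lemma~\ref{lem:E1E2} and simply defers to Lemmas~3 and~4 of \cite{ZhuYin_15_arxiv}, and those proofs use exactly your machinery --- level-by-level BFS exposure with deferred decisions so that $\offspringsize'(v)$ is binomial with mean $\mu(1-o(1))$, a per-node Chernoff bound of the form $\exp\bigl(-\delta^2\mu/(2+\delta)\bigr)$ unioned over the tree (this shape is visible where the paper reuses the bound in the proof of Lemma~\ref{lem:Etilde}), and for $E_2$ the observation that collision edges live only on never-queried vertex pairs, hence are stochastically dominated by $\hbox{Bi}\bigl(\binom{l_j}{2},p\bigr)$ conditional on the exploration history, with the three regimes split at $m^-$ and $m^+$ handled by Markov, Chernoff, and multiplicative Chernoff respectively. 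Your identification of the BFS dependence issue and its resolution via compartmentalized randomness is the correct and central point.

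One quantitative slip deserves correction: condition (c3) does \emph{not} make $|\levelset_{\obsTime+D}|=n^{o(1)}$. Since $\obsTime+D$ can be as large as $\frac{\log n}{(1+\alpha)\log\mu}$, the tree through that level has up to $[(1+\delta)\mu]^{\obsTime+D}=n^{\frac{1}{1+\alpha}+o(1)}$ nodes, i.e., polynomially many, and with $\mu=\Theta(\log n)$ the Chernoff tails $\exp(-c\delta^2\mu)$ are only \emph{polynomially} small, not super-polynomially small as you claim. The union bound therefore closes only when the implicit constant in (c1) is large enough that $\frac{\delta^2\mu}{2+\delta}$ dominates $(\obsTime+D)\log[(1+\delta)\mu]\leq\frac{\log n}{1+\alpha}(1+o(1))$; this is exactly the constant-dependence the paper's appendix makes explicit by repeatedly invoking $\mu>\frac{1}{Cq\theta}\log n$ with restrictions on $C$. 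Your argument survives with this bookkeeping fixed (and, in the first collision regime, the union over levels is even unnecessary since $\collisionsize_j$ is monotone in $j$, so $\Pr(\collisionsize_{\lfloor m^-\rfloor}\geq 1)=O(1/\mu)$ suffices), but as written the claims ``$n^{o(1)}$ nodes'' and ``super-polynomially small tails'' are both false under (c1)--(c3) and should be replaced by the explicit exponent comparison above.
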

The proof follows directly from the proof of Lemma 3 and 4 in \cite{ZhuYin_15_arxiv}.

\section{Proof of $E_3$}
\begin{lemma}\label{lem:E3}
    Assume the conditions in Theorem 2 hold, for any $\epsilon > 0,$ we have
    \[
        \Pr(E_3|E_1)\geq 1-\epsilon,
    \]
    for suffciently large $n$.
\end{lemma}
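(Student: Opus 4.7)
The plan is to exploit the fact that $v \in \singleLevelSet_{D+t+1}$ lies strictly beyond the infection horizon, so only a small number of its neighbors can possibly be infected. Since the infection has propagated at most $t$ hops from each source and every source lies within $\levelset_D$, the infected set is contained in $\levelset_{D+t}$. A neighbor of a level-$(D+t+1)$ node resides on levels $D+t$, $D+t+1$, or $D+t+2$; hence the only candidates for observed infected neighbors of $v$ are its level-$(D+t)$ neighbors. In the BFS tree $\tree^\dag$, exactly one such neighbor is $v$'s parent, while every other is the endpoint of a collision edge. Writing $C_v$ for the number of collision edges joining $v$ to $\singleLevelSet_{D+t}$, we obtain $\psi(v) \leq 1 + C_v$.

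Next I would bound $C_v$ using the edge-independence of the ER graph. Conditional on $E_1$, $l'_{D+t} \leq ((1+\delta)\mu)^{D+t}$, and since each node in $\singleLevelSet_{D+t}$ is independently connected to $v$ with probability $p=\mu/n$, $C_v$ is stochastically dominated by $\mathrm{Bin}(l'_{D+t}-1, p)$. For any integer $k \geq 1$,
\begin{equation*}
\Pr(C_v \geq k \mid E_1) \;\leq\; \binom{l'_{D+t}}{k} p^k \;\leq\; \frac{((1+\delta)\mu)^{k(D+t+1)}}{k!\, n^k}.
\end{equation*}
Condition (c3), combined with $\obsTime=\omega(D)$, yields $D+t+1 \leq (2/3+o(1))\log n/\log\mu$, so $((1+\delta)\mu)^{D+t+1} = n^{2/3+o(1)}$; taking $k=3$ therefore gives $\Pr(C_v \geq 3 \mid E_1) \leq n^{-1+o(1)}$.

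A union bound over the at most $l'_{D+t+1} \leq ((1+\delta)\mu)^{D+t+1} = n^{2/3+o(1)}$ nodes in $\singleLevelSet_{D+t+1}$ then gives $\Pr(\max_v C_v \leq 2 \mid E_1) \geq 1 - n^{-1/3+o(1)} \to 1$. Condition (c1) makes $(1-\delta)^3\mu q\theta = \Omega(\log n)$, which eventually dominates the deterministic bound $\psi(v) \leq 1 + C_v \leq 3$. Hence $\psi(v) < (1-\delta)^3\mu q\theta$ holds uniformly in $v \in \singleLevelSet_{D+t+1}$ with conditional probability at least $1 - \epsilon$ for all sufficiently large $n$, which is the lemma's claim.

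The main obstacle I anticipate is the union bound: since $|\singleLevelSet_{D+t+1}|$ can be as large as $n^{2/3+o(1)}$, a plain Markov estimate on $C_v$ would not close the loop, because $\mathbb{E}[C_v]$ only decays polynomially. The crucial saving is the factor $p^k \approx n^{-k}$ attached to ordered $k$-tuples of collision edges, so it is enough to take $k=3$. Tracking the $o(1)$ exponents in $(1+\delta)^{D+t+1}$ and $\mu^{D+t+1}$ to verify that the product vanishes is routine once condition (c3) is invoked, but one must retain the slack provided by condition (c1) for the final comparison against $(1-\delta)^3\mu q\theta$.
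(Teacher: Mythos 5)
Your proposal is correct, and while it follows the paper's reduction exactly, the tail estimate at its core is genuinely different. Like the paper, you first argue that conditioned on $E_1$ every observed infected neighbor of a node $v$ on level $t+D+1$ must lie on level $t+D$, that apart from $v$'s BFS parent each such neighbor arrives via a collision edge, and that the collision-edge count is stochastically dominated by a binomial $\mathrm{Bi}\bigl(l'_{t+D},\mu/n\bigr)$ with $l'_{t+D}\le[(1+\delta)\mu]^{t+D}$, whose mean is at most $1$ under (c3). From there the paths diverge: the paper applies a Chernoff bound at the diverging threshold $X=(1-\delta)^3\mu q\theta-1=\Omega(\log n)$, obtaining a per-node failure probability $\exp\bigl(-\Omega(\mu q\theta)\bigr)$, and must then tune the constant in condition (c1) (via $C\le 6/((1-\delta)^3(1+\alpha))$) so that this exponent beats the union bound over the $[(1+\delta)\mu]^{t+D+1}=n^{2/3+o(1)}$ nodes on level $t+D+1$. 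You instead prove the stronger structural statement that with probability $1-n^{-1/3+o(1)}$ \emph{every} such $v$ has at most two collision edges into level $t+D$, via the elementary estimate $\binom{N}{3}p^{3}\le n^{-1+o(1)}$, and only then invoke (c1) for the trivial comparison of the constant bound $\psi(v)\le 3$ with $(1-\delta)^3\mu q\theta\to\infty$. This buys a cleaner separation of hypotheses: condition (c3) alone (through $t+D+1\le(\tfrac{2}{3}-\eta)\log n/\log\mu$, using $t=\omega(D)$ and $\log\mu\to\infty$ to absorb the $(1+\delta)^{t+D+1}$ factor) carries the union bound, with no interplay between the constant in (c1) and the exponent $1/(1+\alpha)$ that the paper's proof requires; indeed $k=2$ would already suffice, since $3\beta-2<0$ whenever $\beta<\tfrac{2}{3}$, so your choice $k=3$ simply adds slack. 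The one step you should phrase as carefully as the paper does is the stochastic domination conditioned on $E_1$: edges from $v$ to level-$(t+D)$ nodes processed before $v$'s parent are forced absent by the exploration, while the remaining pairs are unqueried and stay independent Bernoulli$(\mu/n)$ because $E_1$ is measurable with respect to the BFS tree --- this is exactly the role of the paper's remark about the order in which the parent of $v$ is introduced, so your argument sits at the same level of rigor.
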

\begin{proof} Since all sources are within $D$ hops from $s_1$ and the snapshot is taken at time $t,$ all the
 infected nodes are within $t+D$ hops from $s_1$. To prove the conclusion, we only need to show
 that any node on level $t + D + 1$ does not have more than $(1-\delta)^3\mu q \theta$ neighbors.
 Since all infected nodes are within level $t + D$, instead of considering the
 observed infected neighbors,  we only need to show that any node on level $t + D + 1$ does not
 have more than $(1-\delta)^3\mu q\theta$ neighbors in level $t + D$.Therefore, in this proof, we consider a more
 restrictive event which is only a topological feature of the ER random and does not depend on
 the infection process.

Based on $E_1,$ there are at most $[(1+\delta)\mu]^{t + D}$ nodes in
level $t + D$ and at most $[(1+\delta)\mu]^{t + D + 1}$ nodes in level
$t + D + 1$. For any node $v$ in level $t + D + 1,$ the neighbors of node
$v$ in $t + D$ are either the node which introduce node $v$ into level
$t + D + 1$ (i.e., the parent of $v$ in the BFS tree) or the collision
edges between node $v$ and nodes in level $t + D$. The total number of
possible collision edges depends on the order that the parent of node $v$
is introduced to the BFS tree.

In general, if the parent of node $v$ is the $i$th node, the number of
possible neighbors on level $t + D$ follows
$\hbox{Bi}([(1+\delta)\mu]^{D + t} - i, \mu/n) + 1$. As a summary, for
any node on level $t + D + 1$ the number of neighbors in level $t + D$ is
stochastically upper bounded by $\hbox{Bi}([(1+\delta)\mu]^{D + t},\mu n) + 1$.
Define
\[
    X\triangleq (1-\delta)^3\mu q \theta - 1.
\]
Define $\delta'$ to be
\[
\delta' \triangleq \frac{Xn}{[(1+\delta)\mu]^t\mu} - 1.
\]

Denote by $N_v$ the number of neighbors in level $t + D$ for one node $v$ on level $t + D + 1$.
\begin{align}
    \Pr(N_v\geq X + 1|E_1) &\leq \exp\left(-\frac{\delta'^2 [(1+\delta)\mu]^{t+D}\mu/n}{2+\delta'}\right)\\
    &=\exp\left(-\frac{\delta'}{2+\delta'}\times \delta'[(1+\delta)\mu]^{t+D}\mu/n\right)\\
    &=\exp\left(-\frac{\delta'}{2+\delta'}\times \left(\frac{Xn}{[(1+\delta)\mu]^{t+D}\mu} -
    1\right)[(1+\delta)\mu]^{t+D}\mu/n\right)\\
    &=\exp\left(-\frac{\delta'}{2+\delta'}\times
    \left(\frac{X}{[(1+\delta)\mu]^{t+D}\mu}[(1+\delta)\mu]^{t+D}\mu - [(1+\delta)\mu]^{t+D}\mu/n\right)\right)\\
    &=\exp\left(-\frac{\delta'}{2+\delta'}\times \left(X - [(1+\delta)\mu]^{t+D}\mu/n\right)\right)\\
&=\exp\left(-\frac{\delta'}{2+\delta'}\times \left((1-\delta)^3\mu q \theta - 1 -
[(1+\delta)\mu]^{t+D}\mu/n\right)\right)\\
&=\exp\left(-\frac{\delta'\mu}{2+\delta'}\left((1-\delta)^3 q \theta - 1/\mu -
[(1+\delta)\mu]^{t+D}/n\right)\right)\\
&\leq \exp\left(-\frac{\delta'(1-\delta)^3 q\label{eqn:inequalityA1} \theta\mu}{2(2+\delta')}\right)\\
&\leq\exp\left(-\frac{(1-\delta)^3 q \theta}{6}\mu\right)\label{eqn:inequalityA2}
\end{align}

Since $t+D<\frac{\log n}{(1+\alpha)\log \mu}$, we have
\begin{align}
    t+D\leq \frac{\log n}{\log \mu}\times\frac{1-\frac{\log \mu}{\log
    n}}{1+\frac{\log(1+\delta)}{\log \mu}}.
\end{align}
Hence,
\begin{align}
    \frac{[(1+\delta)\mu]^{t+D}}{n}\leq\frac{1}{\mu}\label{eqn:condition3-1}
\end{align}
Hence, Inequality (\ref{eqn:inequalityA1}) is based on Inequality (\ref{eqn:condition3-1}) and Inequality (\ref{eqn:inequalityA2}) is based on $\delta'\geq 1$ for sufficiently large $n.$
For any node in level $ t + D + 1,$ we have
\begin{align*}
& \Pr\left(\cap_v N_v< X + 1|E_1\right)\\
=& 1 - \Pr \left(\cup_v N_v\geq X + 1|E_1\right)\\
\geq  & 1- \sum_v\Pr \left(N_v\geq X + 1|E_1\right)\\
\geq & 1 - [(1+\delta)\mu]^{t + D + 1} \exp\left(-\Omega(\mu)\right)\\
\geq & 1 -\exp\left((t + D +1)\log[(1+\delta)\mu]-\frac{(1-\delta)^3 q \theta}{6}\mu\right)
\end{align*}

Note we have $t + D\leq \frac{\log n}{(1+\alpha)\log \mu}$. Therefore,
\begin{align*}
&\Pr\left(\cap_v N_v< X + 1|E_1\right)\\
\geq &1- \exp\left(((t+D)\log[(1+\delta)] + (t+D)\log\mu+\log[(1+\delta)\mu])-\frac{(1-\delta)^3 q \theta}{6}\mu\right)\\
\geq &1 - \exp\left(\frac{\log n}{(1+\alpha)\log \mu}\log[(1+\delta)] + \frac{\log n}{(1+\alpha) }+\log[(1+\delta)\mu]-\frac{(1-\delta)^3 q \theta}{6}\mu\right)
\end{align*}
Let $C \leq \frac{6}{(1-\delta)^3 (1+\alpha)}$
and since $\mu>\frac{1}{Cq\theta}\log n,$ we have
\[
\Pr\left(\cap_v N_v< X + 1\right) \geq 1 - \exp\left(-\Omega(\mu)\right)
\]
for sufficiently large $n$. By substituting $X,$ we proved the lemma.
\end{proof}

\section{Proof of $E_4$ and $E_5$}
\subsection{Neighboring structure of all sources}
To prove $E_4$ and $E_5$ happen with a high probability, we first analyze the neighborhood of all
sources in the ER random graph.
In this section, we derived upper and lower bounds of the $t$ neighborhood of all sources. Define
${\cal L}^i_l$ the set of nodes from level $0$ to level $l$ of the BFS tree rooted in source $s_i$.
In addition, define $\phi'_i(v)$ the number of offsprings of node $v$ on the BFS tree rooted in
source $s_i$.
Define
\[
    E^i_1 =\{\forall v \in {\cal L}^i_{t-1}, \phi'_i(v)\in\left( (1-\delta)\mu, (1+\delta)\mu \right)
\]

Denote by the event
\[
    \tilde{E}=\cap_{i=2}^m E^{i}_1 \cap E_1.
\]
We have the following lemma.
\begin{lemma}\label{lem:Etilde}
    If the conditions in Theorem 2 hold, for any $\epsilon>0$,
    \[
        \Pr(\tilde{E})\geq 1-\epsilon
    \]
    for sufficiently large $n$.
\end{lemma}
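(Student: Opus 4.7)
The plan is to reduce Lemma \ref{lem:Etilde} to a union bound over the $m$ sources, leveraging the symmetry of the Erd\H os-R\' enyi random graph together with the already-established concentration result for $E_1$.

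First, I observe that each event $E^i_1$ has exactly the same form as $E_1$ but with the BFS tree rerooted at source $s_i$ and truncated at level $t-1$ instead of level $t+D$. Because the ER random graph is vertex-transitive in distribution (the wiring probability $p$ is identical for every pair of nodes) and because $s_i$ is a fixed, pre-specified vertex, the distribution of the BFS tree rooted at $s_i$ is identical to that rooted at $s_1$. Consequently, the proof of Lemma \ref{lem:E1E2} (which controls offspring counts level by level via Chernoff bounds on $\mathrm{Bi}(n - |\levelset_k|, p)$ variables) applies verbatim to each $E^i_1$; in fact, because $t - 1 \le t + D$, $E^i_1$ is an \emph{easier} event to satisfy than its analogue at depth $t+D$, so the same bound $\Pr(E^i_1) \ge 1 - \epsilon'$ holds for every fixed $\epsilon' > 0$ and sufficiently large $n$.

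Next, I would combine the $m$ events by a straightforward union bound:
\[
\Pr(\tilde{E}) = \Pr\!\left(E_1 \cap \bigcap_{i=2}^{m} E^i_1\right) \ge 1 - \Pr(\bar E_1) - \sum_{i=2}^{m} \Pr(\bar E^i_1).
\]
Choosing $\epsilon' = \epsilon / m$ in each of the $m$ individual bounds yields $\Pr(\tilde E) \ge 1 - \epsilon$ for all sufficiently large $n$. The crucial point that makes this work is the standing assumption that $m$ is a constant independent of $n$, so the union bound does not degrade the probability asymptotically. No independence between the $E^i_1$'s is needed, which is important since they are all defined on the same underlying random graph and are in general dependent (for example, any edge lying within the intersection of two BFS trees is shared).

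There is really no main obstacle here; the lemma is essentially bookkeeping that promotes the single-source concentration statement of Lemma \ref{lem:E1E2} to a simultaneous statement over all $m$ sources. The only thing worth being careful about is to state explicitly that the depth $t - 1$ used in $E^i_1$ is dominated by the depth $t + D$ used in $E_1$, so no new tail estimates are required; the constant $\epsilon'$ in each application may simply be rescaled to absorb the factor $m$.
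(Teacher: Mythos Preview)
Your proposal is correct and follows essentially the same approach as the paper: both arguments exploit the distributional symmetry of the ER graph to transfer the $E_1$ bound from Lemma~\ref{lem:E1E2} to each $E^i_1$, then combine via a union bound over the constant number $m$ of sources. The paper writes out the explicit Chernoff-type tail $\Pr(\bar E^i_1)\le 8[(1+\delta)\mu]^{t-1}\exp(-\delta^2\mu/(2+\delta))$ and verifies the condition on $t$ directly, whereas you phrase it more abstractly via vertex-transitivity and the observation that depth $t-1\le t+D$ makes $E^i_1$ no harder than $E_1$; these are the same argument.
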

\begin{proof}
For each infection source $s_i$, follow the similar argument of Lemma 3 in \cite{ZhuYin_15_arxiv}, we have
\[
    \Pr(E^{i}_1)\geq\exp\left(-8\exp\left(-\frac{\delta^2\mu}{2+\delta}+(t-1)\log[(1+\delta)\mu]\right)\right)\geq
    1-\frac{8\left( \mu\left( 1+\delta \right) \right)^{t-1}}{\exp\left(
        \frac{\delta^2\mu}{2+\delta} \right)}
\]
Hence, we have
\[
    \Pr(\bar{E}^{i}_1)\leq \frac{8\left( \mu\left( 1+\delta \right) \right)^{t-1}}{\exp\left(
        \frac{\delta^2\mu}{2+\delta} \right)}
\]
Therefore, with an union bound, we have
\begin{align*}
    \Pr(\cap_{i = 2}^m E^{i}_1) & \geq 1 - \sum_{i=1}^m \Pr(\bar{E}^{i}_1)\\
    & \geq 1 -  \frac{8(m-1)\left( \mu\left( 1+\delta \right) \right)^{t-1}}{\exp\left(
        \frac{\delta^2\mu}{2+\delta} \right)}\\
    & \geq 1- \exp\left(\log 8(m-1) + (t-1)\log\left(\mu(1+\delta)\right)- \frac{\delta^2\mu}{2+\delta} \right)
\end{align*}
To make the probability larger than $1-\epsilon$, we have
\[
    t\leq \frac{\log \frac{\epsilon}{8(m-1)}+\frac{\delta^2\mu}{2+\delta}}{\log (\mu(1+\delta))}+1
\]
Again, we have $t+D<\frac{\log n}{(1+\alpha)\log \mu}$ and $\mu>\frac{1}{Cq\theta}\log n > 3\log n$ which guarantees the
probability goes to 1 asymptotically.

Note the events $\cap_{i =2}^m E^{i}_1$ do not contain the neighborhood of source $s_1$.
Based on Lemma~\ref{lem:E1E2}, with a union bound, it
is straightforward to show that
\[
    \Pr\left(\cap_{i=2}^m E^{i}_1 \cap E_1\right)>1-\epsilon.
\]
for sufficiently large $n$.
Hence the lemma is proved.

\end{proof}

\subsection{Proof of $E_4$ and $E_5$}
\begin{lemma}\label{lem:E4E5}
    If the conditions in Theorem 2 hold, for any $\epsilon>0$,
    \[
        \Pr(E_4, E_5)\geq 1-\epsilon
    \]
    for sufficiently large $n$.
\end{lemma}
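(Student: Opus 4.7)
}
The plan is to condition on the topological event $\tilde E$ from Lemma~\ref{lem:Etilde}, so that every node in the relevant region of the BFS tree rooted at $s_1$ has between $(1-\delta)\mu$ and $(1+\delta)\mu$ offsprings, and then to apply two Chernoff bounds (one for infection, one for reporting) combined with a union bound over all nodes that could plausibly be infected by time $t-1$. The two Chernoff steps each lose only a factor of $(1-\delta)$ in the multiplicative bound, matching the $(1-\delta)^2\mu q$ and $(1-\delta)^3\mu q\theta$ targets in $E_4$ and $E_5$.

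First, fix any $v\in\cup_{i=0}^{t-1}{\cal Z}_i$ and let $i\le t-1$ be an infection time of $v$. Conditional on $\tilde E$, $v$ has $\phi'(v)\ge (1-\delta)\mu$ offsprings in the BFS tree rooted at $s_1$. For each such offspring $u$, the direct infection attempt from $v$ to $u$ succeeds independently with probability $q_{vu}\ge q$, and this attempt happens within the observation window since $i+1\le t$; consequently $\Pr(u\in\infObsFull)\ge q$ and these events are mutually independent across offsprings of $v$ by the independence assumption on the edge infections. A multiplicative Chernoff bound then gives
\[
\Pr\bigl(\psi'(v)\le (1-\delta)^2\mu q\mid \tilde E,\phi'(v)\bigr)\le \exp\!\left(-\tfrac{\delta^2}{2}(1-\delta)\mu q\right)=\exp\bigl(-\Omega(\mu q)\bigr).
\]
Next, conditional on $\psi'(v)\ge (1-\delta)^2\mu q$, each infected offspring independently reports its state with probability $\theta_v\ge\theta$, so another Chernoff bound yields
\[
\Pr\bigl(\psi''(v)\le (1-\delta)^3\mu q\theta \bigm| \psi'(v)\ge(1-\delta)^2\mu q\bigr)\le \exp\bigl(-\Omega(\mu q\theta)\bigr).
\]
Combining, for any fixed $v$ the bad event has probability $\exp(-\Omega(\mu q\theta))$ given $\tilde E$.

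Now apply a union bound. The set $\cup_{i=0}^{t-1}{\cal Z}_i$ is contained in the union of the $t$-neighborhoods of the $m$ sources, which, on the event $\tilde E$, has cardinality at most $m[(1+\delta)\mu]^{t}$. Under (c3), $t\le\tfrac{\log n}{(1+\alpha)\log\mu}$ for some $\alpha>\tfrac12$, so this count is at most $n^{1/(1+\alpha)+o(1)}=n^{2/3+o(1)}$. Hence
\[
\Pr(\bar E_4\cup\bar E_5\mid\tilde E)\le m[(1+\delta)\mu]^{t}\exp\bigl(-\Omega(\mu q\theta)\bigr).
\]
By (c1), $\mu q\theta=\Omega(\log n)$, so the factor $\exp(-\Omega(\mu q\theta))$ is $n^{-C'}$ with $C'$ as large as we like by choosing the constant in (c1) appropriately. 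Therefore the product tends to zero, and combining with $\Pr(\tilde E)\ge 1-\epsilon/2$ from Lemma~\ref{lem:Etilde} gives $\Pr(E_4\cap E_5)\ge 1-\epsilon$ for sufficiently large $n$.

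The only delicate point is the interplay between the polynomial blow-up $[(1+\delta)\mu]^{t}$ from the union bound and the exponential decay $\exp(-\Omega(\mu q\theta))$ from Chernoff; condition (c3) bounds $t$ strictly below $\tfrac{2}{3}\tfrac{\log n}{\log\mu}$, keeping the polynomial factor $n^{2/3+o(1)}$ safely smaller than the exponential improvement $n^{-C'}$ once the constant in (c1) is taken large enough. A minor technical subtlety is making the two Chernoff steps formally rigorous, which is cleanest by proving the lower bound on $\psi'(v)$ for a single fixed infection time $i\le t-1$ and then absorbing the additional union over $i\in\{0,\dots,t-1\}$ into the overall union bound, since $t=n^{o(1)}$.
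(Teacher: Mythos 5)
Your proposal is correct and takes essentially the same route as the paper's proof: condition on the offspring-regularity event $\tilde{E}$ of Lemma~\ref{lem:Etilde}, stochastically lower-bound $\psi'(v)$ by $\hbox{Bi}((1-\delta)\mu,q)$ via the independent direct infection attempts and $\psi''(v)$ by $\hbox{Bi}((1-\delta)^2\mu q,\theta)$ via independent reporting, apply the two Chernoff bounds with the same exponents, and union-bound over the at most $O(m[(1+\delta)\mu]^{t})$ nodes reachable within $t$ hops of the sources, closing with (c1) and the bound on $t$ implied by (c3). The only cosmetic differences are that the paper organizes the union bound as an iterated conditioning on the histories ${\cal Z}_1,\dots,{\cal Z}_{t-1}$ (which also repairs your slightly loose claim that the events $\{u\in\infObsFull\}$ are mutually independent---only the dominating direct-attempt indicators are, which suffices by stochastic domination), and that the constant interplay is resolved by requiring $\mu>\frac{2}{\delta^2(1-\delta)^2 q\theta}\log n$ rather than by ``choosing the constant in (c1).''
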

\begin{proof}
We still consider the BFS tree rooted at source $s_1$ and we can rewrite the events $E_4$ and $E_5$
in a combined fashion
\[
    E_4\cap E_5 =\{\forall v \in \cup_{i=0}^{t-1} {\cal Z}_i, \psi'(v)\geq (1-\delta)^2\mu q,
    \psi''(v)\geq (1-\delta)^3\mu q\theta
\}.
\]

Define
\[
    {\cal F}_i = \{{\cal Z}_i|\forall v \in{\cal Z}_i, \psi'(v)\geq (1-\delta)^2\mu q,
\psi''(v)\geq (1-\delta)^3\mu q\theta\}
\]
Then, we have
\begin{align*}
 & \Pr(E_4,E_5)\\
   \geq &\Pr(E_4,E_5|\tilde{E})\Pr(\tilde{E})\\
\end{align*}
\begin{align*}
    & \Pr(E_4,E_5 | \tilde{E})\\
  = & \Pr(\forall v \in \cup_{i=0}^{t-1} {\cal Z}_i, \psi'(v)\geq (1-\delta)^2\mu q,\psi''(v)\geq (1-\delta)^3\mu q\theta|\tilde{E})\\
  = & \sum_{{\cal Z}_1\in {\cal F}_1} \cdots\sum_{{\cal Z}_{t - 1}\in {\cal F}_{t - 1}} \Pr(\forall
  v \in {\cal Z}_{t - 1}, \psi'(v)>(1-\delta)^2 \mu q,\psi''(v)\geq (1-\delta)^3\mu q\theta\\
  &| {\cal Z}_{t - 1}, {\cal Z}_{t - 2}, \cdots,
  {\cal Z}_{1},\tilde{E})\Pr( {\cal Z}_{t - 1}, {\cal Z}_{t - 2}, \cdots,
  {\cal Z}_{1}|\tilde{E})
\end{align*}
We have
\begin{align*}
 &   \Pr(\forall v \in {\cal Z}_{t - 1},\psi'(v)>(1-\delta)^2 \mu q| {\cal Z}_{t - 1}, {\cal Z}_{t - 2}, \cdots,
  {\cal Z}_{1},\tilde{E})\\
  \geq & 1 - \sum_{v \in {\cal Z}_{t - 1}} \Pr( \psi'(v)\leq(1-\delta)^2 \mu q,\psi''(v)\geq (1-\delta)^3\mu q\theta| {\cal Z}_{t - 1}, {\cal Z}_{t - 2}, \cdots,
  {\cal Z}_{1},\tilde{E})
\end{align*}
Note conditioned on ${\cal Z}_{t - 1}, {\cal Z}_{t - 2}, \cdots, {\cal Z}_{1}$, consider an
offspring $u$ of node $v$ on the BFS tree rooted at source $s_1$. Node $u$ has two possible states:
infected or susceptible. If $u$ is not infected, $v$ will infect node $u$ with probability $q$ in
the next time slot. On the other hand, if $u$ is infected, it counts as an infected offspring of
node $v$ deterministically. Therefore, $\psi'(v)$ is stochastically lower bounded by binomial
distribution $B( (1-\delta)\mu, q)$. Therefore, with Chernoff bound in \cite{ZhuYin_15_arxiv}, we have
\[
    \Pr( \psi'(v)\leq(1-\delta)^2 \mu q| {\cal Z}_{t - 1}, {\cal Z}_{t - 2}, \cdots,
    {\cal Z}_{1},\tilde{E})\leq \exp\left(-\frac{\delta^2(1-\delta)\mu q}{2}\right)
\]
 Each infected nodes are observed with probability $\theta$ independently. Conditioned on $\psi'(v)\geq (1-\delta)^2 \mu q$, $\psi''(v)$ is stochastically lower bounded by
$B((1-\delta)^2\mu q, \theta)$. Therefore,
\[
    \Pr(\psi''(v)\leq(1-\delta)^3\mu q\theta |\psi'(v)\geq(1-\delta)^2 \mu q, {\cal Z}_{t - 1}, {\cal Z}_{t - 2}, \cdots,
    {\cal Z}_{1},\tilde{E})\leq \exp\left(-\frac{\delta^2(1-\delta)^2\mu q \theta}{2}\right)
\]
Therefore, we have
\begin{align*}
    & \Pr(\psi''(v)\geq(1-\delta)^3\mu q\theta ,\psi'(v)\geq(1-\delta)^2 \mu q| {\cal Z}_{t - 1}, {\cal Z}_{t - 2}, \cdots,
    {\cal Z}_{1},\tilde{E})\\
= &\left(1- \exp\left(-\frac{\delta^2(1-\delta)\mu q}{2}\right)
\right)\left(1 -  \exp\left(-\frac{\delta^2(1-\delta)^2\mu q \theta}{2}\right)\right)\\
\geq & 1 -  \exp\left(-\frac{\delta^2(1-\delta)\mu
q}{2}\right)-\exp\left(-\frac{\delta^2(1-\delta)^2\mu q \theta}{2}\right)\\
\geq & 1 - 2\exp\left(-\frac{\delta^2(1-\delta)^2\mu q \theta}{2}\right)
\end{align*}
Again with union bound, we have
\begin{align*}
 & \Pr(\forall v \in {\cal Z}_{t - 1},\psi'(v)>(1-\delta)^2 \mu q,\psi''(v)\geq(1-\delta)^3\mu q\theta | {\cal Z}_{t - 1}, {\cal Z}_{t -
 2}, \cdots, {\cal Z}_{1},\tilde{E})\\
 \geq & 1 - 2|{\cal Z}_{t - 1}| \exp\left(-\frac{\delta^2(1-\delta)^2\mu q \theta}{2}\right)
\end{align*}
Note, based on event $\tilde{E}$, we have
\[
    |{\cal Z}_{t - 1}| \leq m\sum_{i = 0}^{t - 1}[ (1+\delta)\mu]^i \leq 2m[(1+\delta)\mu]^{t-1}
\]
Hence, we have
\begin{align*}
 & \Pr(\forall v \in {\cal Z}_{t - 1},\psi'(v)>(1-\delta)^2 \mu q,\psi''(v)\geq(1-\delta)^3\mu q\theta| {\cal Z}_{t - 1}, {\cal Z}_{t -
 2}, \cdots, {\cal Z}_{1},\tilde{E})\\
 \geq & 1 - 4m[(1+\delta)\mu]^{t-1} \exp\left(-\frac{\delta^2(1-\delta)^2\mu q \theta}{2}\right)
\end{align*}

Therefore, we have
\begin{align*}
    & \Pr(E_4,E_5 | \tilde{E})\\
  = & \sum_{{\cal Z}_1\in {\cal F}_1} \cdots\sum_{{\cal Z}_{t - 1}\in {\cal F}_{t - 1}} \Pr(\forall
  v \in {\cal Z}_{t - 1}, \psi'(v)>(1-\delta)^2 \mu q,\psi''(v)\geq(1-\delta)^3\mu q\theta\\
  &| {\cal Z}_{t - 1}, {\cal Z}_{t - 2}, \cdots,
  {\cal Z}_{1},\tilde{E}) \Pr( {\cal Z}_{t - 1}, {\cal Z}_{t - 2}, \cdots,
  {\cal Z}_{1}|\tilde{E})\\
  \geq & \sum_{{\cal Z}_1\in {\cal F}_1} \cdots\sum_{{\cal Z}_{t - 1}\in {\cal F}_{t - 1}}
  \left(1-4m[(1+\delta)\mu]^{t-1} \exp\left(-\frac{\delta^2(1-\delta)^2\mu q
  \theta}{2}\right)\right)\\
 \times &\Pr( {\cal Z}_{t - 1}, {\cal Z}_{t - 2}, \cdots,
  {\cal Z}_{1}|\tilde{E})\\
  = &  \left(1 - 4m[(1+\delta)\mu]^{t-1}\exp\left(-\frac{\delta^2(1-\delta)^2\mu q\theta}{2}\right)
\right)\\
\times &\sum_{{\cal Z}_1\in {\cal F}_1} \cdots\sum_{{\cal Z}_{t - 2}\in {\cal F}_{t - 2}} \Pr(\forall
  v \in {\cal Z}_{t - 2}, \psi'(v)>(1-\delta)^2 \mu q,\psi''(v)\geq(1-\delta)^3\mu q\theta\\
  &| {\cal Z}_{t - 2}, {\cal Z}_{t - 3}, \cdots,
  {\cal Z}_{1},\tilde{E}) \Pr( {\cal Z}_{t - 2}, {\cal Z}_{t - 3}, \cdots,
  {\cal Z}_{1}|\tilde{E})\\
\end{align*}

Then, iteratively apply the similar arguments, we obtain,
\begin{align}
    & \Pr(E_4,E_5 | \tilde{E})\\
    \geq & \prod_{i=2}^{t} \left(1 -
    4m[(1+\delta)\mu]^{i-1}\exp\left(-\frac{\delta^2(1-\delta)^2\mu q\theta}{2}\right)\right)\\
    = &  \prod_{i=2}^{t} \left(1 -
  \exp\left(\log 4m + (i-1)\log[(1+\delta)\mu]-\frac{\delta^2(1-\delta)^2\mu q\theta}{2}\right)\right)\\
  \geq &  \prod_{i=2}^{t}\exp\left(-2
  \exp\left(\log 4m + (i-1)\log[(1+\delta)\mu]-\frac{\delta^2(1-\delta)^2\mu q\theta}{2}\right)\right)\\
  = & \exp\left(-2\sum_{i=2}^{t} \exp\left(\log 4m +
  (i-1)\log[(1+\delta)\mu]-\frac{\delta^2(1-\delta)^2\mu q\theta}{2}\right)\right)\\
  = & \exp\left(-2\exp\left(\log 4m -\frac{\delta^2(1-\delta)^2\mu q\theta}{2}\right) \sum_{i = 1}^{t - 1}
  \exp\left( i\log[(1+\delta)\mu]\right)\right)\\
  \geq & \exp\left(-4\exp\left(\log 4m -\frac{\delta^2(1-\delta)^2\mu q\theta}{2}
  +(t-1)\log[(1+\delta)\mu]\right)\right)\label{eqn:E4E5}
\end{align}
To make the probability greater than $1-\epsilon$, we need,
\begin{align}
    t\leq 1 + \frac{\log\log(1-\epsilon)^{-1/4}-\log 2m +\frac{\delta^2(1-\delta)^2\mu q
    \theta}{2}}{\log((1+\delta)\mu)}\label{eqn:anotherT}
\end{align}
Since we have $t+D<\frac{\log n}{(1+\alpha)\log \mu}$ and $\mu>\frac{1}{Cq\theta}\log n > \frac{2}{\delta ^2(1-\delta)^2q\theta}\log n,$ Inequality
\ref{eqn:anotherT} is satisfied when $n$ is large enough.

Therefore, based on Lemma~\ref{lem:Etilde} and Inequality~\ref{eqn:E4E5}, we showed that
\[
    \Pr(E_4,E_5)\geq \Pr(E_4,E_5|\tilde{E})\Pr(\tilde{E})\geq 1-\epsilon
\]
for any $\epsilon>0$ when $n$ is sufficiently large.
\end{proof}
\section{Proof of $E_6$}
\begin{lemma}\label{lem:E6}
    If the conditions in Theorem 2 hold, for any $\epsilon>0$,
    \[
        \Pr(E_6|E_1)\geq 1-\epsilon
    \]
    for sufficiently large $n$.
\end{lemma}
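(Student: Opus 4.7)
The plan is to decompose $E_6$ into its two sub-events and establish each by a Chernoff-style argument applied to the BFS tree $\tree^\dag$ rooted at $s_1$, using $E_1$ to control per-node offspring counts and treating tree-edge infections as a stochastic lower bound on the actual single-source process (collision edges can only help the infection spread).

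For the first sub-event $\{\tilde Z^1_1 \geq (1-\delta)^2\mu q\}$, I would condition on $E_1$ so that $\phi'(s_1) \geq (1-\delta)\mu$. Each tree-offspring of $s_1$ is infected at time $1$ independently with probability at least $q$, so $\tilde Z^1_1$ stochastically dominates $\mathrm{Bi}((1-\delta)\mu, q)$; a Chernoff bound then delivers the required lower bound except on an event of probability $\exp(-\Omega(\mu q))$, which is $o(1)$ under (c1).

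For the second sub-event, I would fix $v \in \tilde{\cal Z}^1_1$ and argue by induction on $j = 1, 2, \ldots, t-1$ that
\[
\tilde Z^{j+1}_{j+1}(v) \;\geq\; [(1-\delta)^2 \mu q]^{j}
\]
holds with high probability. The inductive step conditions on the level-$j$ infected descendants of $v$; by $E_1$ each such descendant has at least $(1-\delta)\mu$ disjoint fresh tree-offsprings, and each such offspring is infected at the next slot independently with probability at least $q$, so the next-level count stochastically dominates $\mathrm{Bi}\bigl(\tilde Z^{j}_{j}(v)\cdot (1-\delta)\mu,\, q\bigr)$. A Chernoff bound then yields the multiplicative $(1-\delta)^2\mu q$ growth with conditional failure probability at most $\exp(-\Omega([(1-\delta)^2\mu q]^{j-1}\mu q))$. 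After $t-1$ successful steps one obtains $\tilde Z^t_t(v) \geq [(1-\delta)^2\mu q]^{t-1}$, and a final Chernoff over the independent reports at level $t$ (each of probability at least $\theta$) produces $\tilde Z'^t_t(v) \geq (1-\delta)\theta \,[(1-\delta)^2 \mu q]^{t-1}$ with failure probability $\exp(-\Omega(\theta [(1-\delta)^2\mu q]^{t-1}))$.

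A union bound over the at most $(1+\delta)\mu$ candidates $v \in \tilde{\cal Z}^1_1$ (bounded via $E_1$) and over the $t$ inductive levels then closes the argument, since every individual failure probability is at worst $\exp(-\Omega(\mu q\theta))$ by (c1) and the polynomial prefactor $\mu t$ is easily absorbed. I expect the main obstacle to be the very first inductive step, where the count feeding the Chernoff is only $\tilde Z^{1}_{1}(v) = 1$, so the concentration quantity is $\mu q$ rather than the much larger $[(1-\delta)^2\mu q]^{j-1}\mu q$ available at later levels; this is still $\Omega(\log n/\theta)$ under (c1) and is therefore sufficient, but it is the binding constant driving the union bound. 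A cleaner, but still essential, observation that must be verified is that distinct $v \in \tilde{\cal Z}^1_1$ have edge-disjoint subtrees in $\tree^\dag$, which makes the union bound over $v$ lossless without any further independence bookkeeping.
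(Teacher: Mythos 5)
Your proposal is correct and takes essentially the same route as the paper: the paper bundles your level-by-level multiplicative-growth induction (base case $\tilde{Z}^1_1\geq(1-\delta)^2\mu q$ plus the per-level Chernoff step) into an auxiliary event $E_7$ whose probability it cites from Lemma 5 of \cite{ZhuYin_15_arxiv}, and then, exactly as you do, applies a Chernoff bound to the $\mathrm{Bi}([(1-\delta)^2\mu q]^{t-1},\theta)$ reporting process and a union bound over the at most $(1+\delta)\mu$ level-1 subtrees guaranteed by $E_1$. The only difference is that you prove the growth estimate directly (correctly noting the binding $j=1$ step with exponent $\Omega(\mu q)$, handled by condition (c1)) rather than citing it.
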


\begin{proof}
Define
\[
        E_7 = \{\tilde{Z}^1_1\geq (1-\delta)^2\mu q\}\cap\{\forall v \in \tilde{Z}_1^1, \cap_{i =
        2}^t\tilde{Z}^i_i(v)\geq(1-\delta)^2\mu q \tilde{Z}^{i-1}_{i-1}(v)\}
\]
Following the similar arguments in Lemma 5 in \cite{ZhuYin_15_arxiv}, we have
\[
    \Pr(E_7|E_1)\geq 1-\epsilon.
\]

Note, for each node $v\in \tilde{Z}_1^1$, based on event $E_7$, we have $\tilde{Z}_t^t(v)\geq[(1-\delta)^2\mu
q]^{t-1}$. Recall each infected node report its status independently. Therefore, $\tilde{Z}'^t_t(v)$ is stochastically lower bounded
by $\hbox{Bi}([(1-\delta)^2\mu q]^{t-1},\theta)$. By Chernoff bound, we have
\[
    \Pr( \tilde{Z}'^t_t(v)\geq [(1-\delta)^2\mu q]^{t-1}(1-\delta)\theta|E_7, E_1)\leq
    \exp\left(-\frac{\delta^2[(1-\delta)^2\mu q]^{t-1}\theta}{2}\right)
\]
Note $\tilde{Z}^1_1\leq (1+\delta)\mu$ based on event $E_1$, with a union bound, we have
\begin{align*}
&    \Pr(\forall v \in \tilde{\cal Z}^1_1,  \tilde{Z}'^t_t(v)\geq [(1-\delta)^2\mu q]^{t-1}(1-\delta)\theta|E_7, E_1)\\
\geq & 1 -   (1+\delta)\mu\exp\left(-\frac{\delta^2[(1-\delta)^2\mu q]^{t-1}\theta}{2}\right)\\
\geq & 1- \exp\left(\log[ (1+\delta)\mu]-\frac{\delta^2[(1-\delta)^2\mu q]^{t-1}\theta}{2}\right)\\
\geq & 1 - \epsilon,
\end{align*}
for sufficiently large $n$ since $\mu>\frac{1}{Cq\theta}\log n$.

Therefore, we have
\[
    \Pr(E_6|E_1)\geq \Pr(E_6|E_7,E_1)\Pr(E_7|E_1)\geq 1-\epsilon.
\]
The lemma is proved.
\end{proof}

\end{document}